\RequirePackage[l2tabu,orthodox]{nag}
\documentclass
[11pt,letterpaper]
{article} 

\usepackage[notes=true,later=false,camera=false]{dtrt}
\usepackage[utf8]{inputenc}
\usepackage{ stmaryrd }
\usepackage{xspace,enumerate}
\usepackage[T1]{fontenc}
\usepackage[full]{textcomp}
\usepackage[american]{babel}
\usepackage{mathtools}

\renewcommand{\hat}[1]{\widehat{#1}}
\usepackage{amsthm}
\usepackage{thmtools}
\usepackage[capitalise,nameinlink]{cleveref}

\usepackage{empheq}

\definecolor{citeblue}{HTML}{0055cc}
\hypersetup{
colorlinks=true,
urlcolor=citeblue,
linkcolor=NavyBlue,
citecolor=PineGreen,
linktocpage=true,
}
\renewcommand*\backref[1]{\ifx#1\relax \else (pg. #1) \fi}

\renewcommand{\tilde}{\widetilde}

\usepackage{paralist}
\usepackage{turnstile}
\usepackage[framemethod=TikZ]{mdframed}
\mdfsetup{frametitlealignment=\center}
\usepackage{tikz}
\usepackage{caption}
\DeclareCaptionType{Algorithm}
\usepackage{newfloat}

\newtheorem{theorem}{Theorem}[section]
\newtheorem{lemma}[theorem]{Lemma}
\newtheorem*{lemma*}{Lemma}

\newtheorem{proposition}[theorem]{Proposition}
\newtheorem{fact}[theorem]{Fact}

\theoremstyle{definition}
\newtheorem{definition}[theorem]{Definition}
\newtheorem*{definition*}{Definition}
\newtheorem{remark}[theorem]{Remark}

\usepackage[linesnumbered,ruled]{algorithm2e}
\crefname{lemma}{Lemma}{Lemmas}
\crefname{fact}{Fact}{Facts}
\crefname{theorem}{Theorem}{Theorems}
\crefname{mtheorem}{Theorem}{Theorems}
\crefname{itheorem}{Theorem}{Theorems}
\crefname{corollary}{Corollary}{Corollaries}
\crefname{claim}{Claim}{Claims}
\crefname{example}{Example}{Examples}
\crefname{algorithm}{Algorithm}{Algorithms}
\crefname{problem}{Problem}{Problems}
\crefname{definition}{Definition}{Definitions}
\crefname{equation}{Eq.}{Eq.}
\crefname{strategy}{Strategy}{Strategies}
\crefname{observation}{Observation}{Observations}

\Crefname{algocf}{Algorithm}{Algorithms}

\usepackage[
letterpaper,
top=1.2in,
bottom=1.2in,
left=1in,
right=1in]{geometry}
\usepackage{newpxtext} 
\usepackage{textcomp} 
\usepackage[scr=rsfso]{mathalfa}
\usepackage{bm} 

\usepackage{microtype}

\usepackage{footnotebackref}

\newcommand{\supp}{\operatorname{supp}}
\newcommand{\AND}{\operatorname{AND}}

\allowdisplaybreaks
\newcommand{\FormatAuthor}[3]{
\begin{tabular}{c}
#1 \\ {\small\texttt{#2}} \\ {\small #3}
\end{tabular}
}

\newcommand{\R}{{\mathbb R}}
\newcommand{\N}{{\mathbb N}}

\newcommand{\eps}{\varepsilon}

\newcommand{\F}{{\mathbb F}}

\newcommand{\1}{\mathbf{1}}

\newcommand{\C}{\mathbb C}

\newcommand{\cG}{\mathcal G}

\newcommand{\poly}{\mathrm{poly}}

\newcommand{\seq}{\subseteq}

\newcommand{\polylog}{\operatorname{polylog}}

\renewcommand{\geq}{\geqslant}

\renewcommand{\leq}{\leqslant}
\renewcommand{\le}{\leqslant}
\renewcommand{\preceq}{\preccurlyeq}

\renewcommand{\epsilon}{\varepsilon}

\newcommand{\ignore}[1]{}

\newcommand{\Cay}{\mathrm{Cay}}
\newcommand{\id}{\operatorname{id}}

\newcommand{\vol}{\operatorname{Vol}}
\newcommand{\Ree}{\operatorname{Re}}
\newcommand{\Imm}{\operatorname{Im}}

\begin{document}
\author{
 \begin{tabular}{cc}
 \FormatAuthor{Arpon Basu}{arpon.basu@princeton.edu}{Princeton University} &
 \FormatAuthor{Pravesh K. Kothari}{kothari@cs.princeton.edu}{Princeton University} \\
  & \\
  \FormatAuthor{Raghu Meka\thanks{Supported by NSF EnCORE: Institute for Emerging CORE Methods in Data Science Award \#2217033 and NSF AF: Small Award \#2425350}}{raghum@cs.ucla.edu}{University of California, Los Angeles} &
  \FormatAuthor{Stefan Tudose}{studose@princeton.edu}{Princeton University} 
  \end{tabular}
  }

\title{Optimal Sparsifiers for Abelian Cayley Graphs}

\maketitle

\abstract{We prove that for every Cayley graph $\mathcal{G}$ over any finite abelian group $G$, there is a weighted Cayley graph with $O(\log |G|)$ generators that is a spectral sparsifier for $\mathcal{G}$. This bound is optimal. Applying our bound to the group $G = \F_2^n$, yields, as a corollary, $O(n/\eps^2)$-sized code sparsifiers for $\F_2$-linear codes, improving on the work of Khanna, Putterman and Sudan \cite{KhannaPS24} who obtained a similar result with an additional $\polylog(n)$ loss. 

Our proof is strongly inspired by a recent work of Reis and Rothvoss \cite{ReisR26} for the construction of $\ell_1$-sparsifiers. Following their work, the abelian Cayley sparsification problem can be reduced to establishing a lower bound for the volume of a certain natural convex body. This volume bound follows from a short, elementary argument that relies on character symmetry. 
}

\pagenumbering{arabic}

\section{Introduction}
Sparsification refers to the process of compressing an object (say a graph, or a code, or a set system) while still retaining some essential features of the object. Sparsification was first introduced by Bencz\'ur and Karger \cite{BenczurK96} in the context of \emph{cut sparsification}, who showed that with $\leq\tilde{O}(n/\eps^2)$ edge weights, one could preserve the values of \emph{all} $2^n$ cuts of a graph up to a multiplicative $1\pm\eps$ factor.

Since then cut sparsification, and generalizations such as spectral sparsification \cite{SpielmanT11,SpielmanS11,BatsonSS14} have proved to be very useful in graph algorithms \cite{BenczurK96,SpielmanS11}, in solving Laplacian linear systems \cite{SpielmanT04}, in reducing the space usage of sublinear time algorithms \cite{AhnGM12b,McGregor14,AbrahamDKKP16,KapralovLMMS17}, and many other applications.

Given the success of sparsification as a paradigm, much effort has also been invested into generalizing graph sparsification to more general objects, such as hypergraphs \cite{KoganK15,ChenKN20,KapralovKTY21,KapralovKTY21a,JambulapatiLS23,Lee23}, codes \cite{KhannaPS24,KhannaPS25,BrakensiekG25}, and CSPs \cite{KoganK15,FiltserK17,ButtiZ20}, to mention a few applications.

One such generalization which has been investigated is the notion of \emph{Cayley sparsification} \cite{KhannaPS24,KhannaPS25,HsiehLMPZ26,BasuKLM26}, which is what we study in this paper. 

We now formally define Cayley graphs and Cayley sparsification:
\begin{definition}[Cayley Graphs]
    Let $G$ be a group, and let $S\subset G$ be a symmetric subset of $G$, i.e. $s\in S$ iff $s^{-1}\in S$. The Cayley graph $\Cay(G, S)$ is a graph on $G$ where $g, g'\in G$ are connected if $g^{-1}g'\in S$. In general we also consider weighted Cayley graphs, wherein we have a symmetric weight function $w:S\to\R_{\geq 0}$ (satisfying $w(s) = w(s^{-1})$ for all $s\in S$), and the weighted Cayley graph $\Cay(G, S, w)$ is the graph $\Cay(G, S)$ where the edge $\{g, g'\}\in E(\Cay(G, S))$ receives the weight $w(g^{-1}g')$. Thus unweighted Cayley graphs can be viewed as possessing the weight function $w:S\to\{1\}$. 
\end{definition}

We can now define Cayley sparsification:
\begin{definition}[Cayley Sparsification]
    Let $\eps\in(0, 1)$ be some parameter. Given a (weighted) Cayley graph $\mathcal{G}:= \Cay(G, S, w)$, we say $\Cay(G, S', w')$ is an $\eps$-Cayley sparsifier for $\mathcal{G}$ if 
    \[(1 - \eps)L\preceq L'\preceq(1 + \eps)L,\]
    where $L$ (resp. $L'$) refers to the Laplacian of $\Cay(G, S, w)$ (resp. $\Cay(G, S', w')$), and $\preceq$ refers to the Loewner order on the space of Hermitian matrices, i.e. $A\preceq B$ iff $B - A$ is positive semidefinite (PSD).

    If $\Cay(G, S', w')$ is an $\eps$-Cayley sparsifier for $\Cay(G, S, w)$, we write $\Cay(G, S', w')\approx_\eps\Cay(G, S, w)$. 
\end{definition}

Invoking spectral graph sparsification primitives such as \cite{SpielmanS11,BatsonSS14} on $\mathcal{G}$ will produce a sparsifier with the right number of edges. However, they crucially lose the algebraic structure and will not produce a sparsifier which is also a Cayley graph which we focus on.

Before we introduce our main result, we recall the Cayley sparsification results of \cite{KhannaPS24,KhannaPS25,HsiehLMPZ26,BasuKLM26} to serve as a point of comparison against our work:
\begin{theorem}[Cayley sparsification over $\F_2^n$ \cite{KhannaPS24,KhannaPS25}]
\label{thm:kps}
    Let $G = \F_2^n$. For any (weighted) Cayley graph $\Cay(G, S, w)$ there exists an $\eps$-Cayley sparsifier $\Cay(G, S', w')$ such that 
    \[|S'|\leq O(\eps^{-2}n\polylog(n))\leq O(\eps^{-2}(\log|G|) \cdot \poly(\log \log |G|)).\] 
    Furthermore $(S', w')$ can be computed in randomized $\poly(n, |S|, \eps^{-1}) = \poly(\log|G|, |S|, \eps^{-1})$ time.
\end{theorem}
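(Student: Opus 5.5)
The plan is to reduce Cayley sparsification over $G=\F_2^n$ to sparsification of an $\F_2$-linear code, and then to import a code-sparsification guarantee. \Cref{thm:kps} is the prior result of \cite{KhannaPS24,KhannaPS25}, so the goal here is to reconstruct its route rather than to invent a new argument.

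\textbf{Diagonalization.} Since $G$ is abelian, the characters $\chi_a(x)=(-1)^{\langle a,x\rangle}$ for $a\in\F_2^n$ simultaneously diagonalize every Cayley Laplacian on $G$. The eigenvalue of $L=L_{\Cay(G,S,w)}$ at $\chi_a$ is
\[\lambda_a=\sum_{s\in S}w(s)\bigl(1-(-1)^{\langle a,s\rangle}\bigr)=2\sum_{s\in S}w(s)\,\1[\langle a,s\rangle=1].\]
Because $L$ and $L'$ share an eigenbasis, $(1-\eps)L\preceq L'\preceq(1+\eps)L$ is equivalent to $\lambda'_a\in[(1-\eps)\lambda_a,(1+\eps)\lambda_a]$ for every $a$.

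\textbf{Translation to codes.} Form the generator matrix whose columns are the elements $s\in S$, and assign coordinate $s$ the weight $w(s)$. The codewords are $c_a=(\langle a,s\rangle)_{s\in S}$, and $\lambda_a/2$ is exactly the weighted Hamming weight of $c_a$. So a Cayley sparsifier is the same object as a reweighted subset $S'\subseteq S$ that preserves the weight of every codeword of this dimension-$\le n$ code up to a factor $1\pm\eps$. Symmetry of $w'$ is automatic, since $s=-s$ in $\F_2^n$.

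\textbf{Code sparsification.} For this step I would follow \cite{KhannaPS24}.
\begin{enumerate}
\item Prove a codeword-counting bound, the analogue of Karger's cut counting: the number of codewords of weight at most $\alpha$ times the minimum nonzero weight is at most $|G|^{O(\alpha)}=2^{O(\alpha n)}$. The argument is a random-restriction or contraction one: sample coordinates and argue that a fixed low-weight codeword survives as a nonzero restricted codeword, combined with the fact that a dimension-$n$ code has at most $2^n$ codewords.
\item Decompose the coordinates by a ``code strength'' or weak-component peeling. Repeatedly find subcodes of large dimension supported on few coordinates, and assign each coordinate a sampling rate inversely proportional to its strength.
\item Importance-sample with these rates, applying Chernoff bounds together with a union bound organized by the counting bound from step 1.
\end{enumerate}
This yields $|S'|=O(\eps^{-2}n\,\polylog n)$ coordinates, with $\polylog$ losses coming from the decomposition levels and the union bound. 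Every step (sampling, and Gaussian elimination for the decomposition) runs in time $\poly(n,|S|,\eps^{-1})$.

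\textbf{Main obstacle.} I expect the counting bound and the strength decomposition to be the hard part. Unlike graph cuts, codewords have no contraction structure, so the substitute has to be linear-algebraic: quotienting by a low-weight subcode, with dimension playing the role of vertex count. I would first try to prove the counting bound by induction on dimension, removing coordinates spanned by few others.
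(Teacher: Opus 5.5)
Your diagonalization and translation to codes are correct. They are also all the paper says about this statement: the theorem is quoted from \cite{KhannaPS24,KhannaPS25} without proof, with only the remark that Cayley sparsification over $\F_2^n$ is the same as sparsifying the weighted code $\{(\langle a,s\rangle)_{s\in S}: a\in\F_2^n\}$. The gap is in the code-sparsification argument you reconstruct. As sketched, it gives neither the stated size nor the stated running time.

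\textbf{Size.} Your step 1 is vacuous. The code has at most $2^n=|G|$ codewords in total, so ``at most $|G|^{O(\alpha)}$ codewords of weight at most $\alpha d$'' holds trivially for $\alpha\ge 1$; for $\alpha<1$ only the zero codeword qualifies. No random restriction is needed. Karger's bound has force only because $V^{2\alpha}$ is exponentially smaller than the total number of cuts. Nothing of that strength holds for codes relative to the minimum distance: for $S=\F_2^n\setminus\{0\}$ the code is the simplex code, whose $2^n-1$ nonzero codewords all have weight exactly $2^{n-1}$.

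Consequently the union bound in your step 3 is just the trivial one over all $2^n$ codewords. A failure probability $e^{-\Theta(\eps^2 p\kappa_s)}$ has to beat $2^n$, which forces a sampling rate of order $n/(\eps^2\kappa_s)$ at a generator of strength $\kappa_s$. That loses a factor of about $n/\polylog(n)$. For instance, let $S$ consist of the nonzero vectors of $n/b$ disjoint coordinate blocks of size $b=\log_2 n$. Then every generator has strength $2^{b-1}=n/2$, the rate exceeds $1$, and nothing is removed from the roughly $n^2/\log n$ generators, although $O(n/\eps^2)$ generators suffice.

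In that example the fix is a genuine counting bound: a codeword of weight at most $\alpha n/2$ is nonzero on at most $\alpha$ blocks, so there are only $n^{O(\alpha)}$ of them. The real content of \cite{KhannaPS24} is a statement of this kind that works level by level for arbitrary codes. By the simplex example, such a statement cannot come from the minimum distance alone. Your proposal never formulates it, and the ``main obstacle'' you describe is aimed at a lemma that is already trivial.

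\textbf{Running time.} The claim that every step is polynomial is unsupported. Your step 2 must locate low-weight codewords, or subcodes supported on few coordinates. Computing the minimum distance of a binary linear code from its generators is NP-hard, and so is approximating it within any constant factor, so this is not Gaussian elimination. The randomized polynomial-time part of the statement needs a decomposition that avoids finding low-weight codewords, and your sketch does not supply one.
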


\begin{theorem}[Cayley sparsification over arbitrary groups \cite{HsiehLMPZ26,BasuKLM26}]
    Let $G$ be an arbitrary group, possibly non-abelian. For any Cayley graph $\Cay(G, S)$ there exists an $\eps$-Cayley sparsifier $\Cay(G, S', w')$ such that 
    \[|S'|\leq O(\eps^{-2}(\log|G|)^{4}).\] 
    Furthermore $(S', w')$ can be computed in randomized $\poly(|G|, |S|, \eps^{-1}) = \poly(|G|, \eps^{-1})$ time.
\end{theorem}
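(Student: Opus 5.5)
The plan is to reduce to the case where the Cayley graph is a good \emph{relative expander} and handle that case with matrix Chernoff, using the group structure to pay only a $\polylog|G|$ overhead. Write $R(g)$ for the permutation matrix of right multiplication by $g$, so that $L = \sum_{s\in S} w(s)\,(I - R(s))$ is a (real symmetric) element of the right group algebra. Subsampling generators i.i.d.\ from the distribution $w/\|w\|_1$ and reweighting gives an unbiased estimator of $L$. Each summand has $\|I-R(s)\|\le 2$, while the ambient dimension is $|G|$. Hence by the matrix Chernoff bound (Alon--Roichman style), $O(\eps^{-2}\log|G|\cdot \|w\|_1/\lambda)$ samples give $(1\pm\eps)$ \emph{relative} approximation on any subspace where $L\succeq \lambda I$. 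So the uniform sampling argument already suffices whenever the spectral gap of $L$ on $\1^\perp$ is $\Omega(\|w\|_1)$, i.e.\ when $\mathcal{G}$ is a constant-degree-normalized expander.

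For the general case I would decompose by spectral scale. Order the weights dyadically and, for each threshold $\tau$, consider the subgroup $H_\tau$ generated by the generators of weight at least $\tau$. This yields a chain of subgroups $1 = H_0 \le H_1\le\cdots\le H_k = \langle S\rangle$, and after merging equal groups the chain has length $k\le \log_2|G|$, since each strict inclusion at least doubles the order. The components of $\Cay(G, S\cap\{w\ge\tau\})$ are exactly the left cosets of $H_\tau$. I would aim to replace $L$, up to an $O(\log|G|)$ factor, by a sum of ``layer'' Laplacians, where each layer is a Cayley graph on the quotient-like structure $H_{i}/H_{i-1}$ lifted to $G$ and is an expander relative to its own total weight. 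If the expansion is only polylogarithmic rather than constant, a further partition into expanding pieces is needed, in the spirit of the expander decompositions used by Khanna--Putterman--Sudan. One then samples $O(\eps^{-2}\log^{c}|G|)$ generators per layer with the Chernoff argument of the previous paragraph and sums the layers; the Loewner sandwich is preserved under sums.

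The bookkeeping is as follows: $\log|G|$ layers, times $\eps^{-2}\log|G|$ samples per layer for Chernoff, times polylogarithmic losses from converting relative expansion into the exact Loewner comparison with $L$. Taking the latter loss to be $\log^2|G|$ gives the claimed $O(\eps^{-2}\log^4|G|)$. The algorithm is randomized polynomial in $|G|$, since computing the subgroup chain, the layers, and the sampling probabilities only requires linear algebra on $|G|\times|G|$ matrices.

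The main obstacle is the decomposition step. In a non-abelian group we cannot diagonalize $L$ by characters, and $L^{+/2}$ does not commute with the individual $R(s)$. A naive importance-sampling bound via effective resistances, $\sum_s w(s)\|L^{+/2}(I-R(s))L^{+/2}\|$, could be as large as $|G|$. I would first try to show that left-translation invariance forces all edges with a common generator $s$ to have the same effective resistance. Then $\|L^{+/2}(2I-R(s)-R(s)^{-1})L^{+/2}\|$ is at most a polylog multiple of that common resistance, after decomposing into irreducible blocks $\sum_s w(s)(I-\rho(s))$ and using that each block sees every $s$ ``spread out.'' Combined with the fact that total resistance $\sum_s w(s)\,\mathrm{Reff}(s)\cdot|G| = |G|-1$, this gives $\sum_s w(s)\|X_s\| = O(\polylog|G|)$, and the layered argument above would then be a fallback for making this rigorous.
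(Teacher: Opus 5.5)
The paper does not prove this statement (it is quoted from \cite{HsiehLMPZ26,BasuKLM26}), so I am judging your proposal on its own terms. It has a genuine gap exactly at the step you flag, and neither of your two routes closes it.

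The matrix Chernoff step is fine. The layered route, however, needs Cayley-respecting pieces that are expanders relative to their own total weight, and these need not exist. Take $G=\mathbb{Z}_p$ with $p$ prime and $S=\{\pm1,\dots,\pm k\}$, unweighted, with $1\ll k\ll p^{2/3}$. The subgroup chain is just $\{0\}<\mathbb{Z}_p$, and every nonempty symmetric $S'\seq S$ gives a connected Cayley graph. Testing against $\chi(x)=e^{2\pi i x/p}$ shows that its smallest nonzero eigenvalue is at most $2\pi^2(k/p)^2|S'|$. So the Chernoff count $\eps^{-2}\log p\cdot |S'|/\lambda$ exceeds $|S|$ for \emph{every} choice of piece. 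There is also no finer decomposition to fall back on: $G$ has no proper nontrivial subgroups whose cosets could serve as vertex sets of pieces. (In non-abelian groups there is the further problem that $H_{i-1}$ need not be normal in $H_i$, so $H_i/H_{i-1}$ is not a group.) Yet this graph is easy to sparsify by importance sampling against the \emph{full} $L$, since $\max_\chi \lambda(\chi,s)/\sum_{s'\in S}\lambda(\chi,s')=O(1/k)$ for every $s$. So the quantity that matters is $w(s)\|X_s\|$ with $X_s=L^{+/2}(2I-R(s)-R(s)^{-1})L^{+/2}$, not the expansion of pieces.

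Your fallback rests on the claim that $\|X_s\|$ is within a polylogarithmic factor of the common resistance $\mathrm{Reff}(s)$ of an $s$-edge. Translation invariance does give a common value, but the claim is false: $\|X_s\|$ is a \emph{maximum} over characters or irreducible blocks, whereas $\mathrm{Reff}(s)$ is an \emph{average} over the whole spectrum. Take $G=\F_2^n$ and $S=\{e_1\}\cup\big((\{0\}\times\F_2^{n-1})\setminus\{0\}\big)$, unweighted. The character $\chi(x)=(-1)^{x_1}$ is nontrivial only on $e_1$, so $\|L^{+/2}(I-R(e_1))L^{+/2}\|=1$. Every other $\chi$ with $\chi(e_1)=-1$ has $L$-eigenvalue at least $2^{n-1}$, giving $\mathrm{Reff}(e_1)<6/|G|$. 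The ratio is therefore $\Omega(|G|)$: a generator outside the subgroup generated by the others must always be kept, yet each of its $|G|/2$ parallel edges has tiny resistance.

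In general $w(s)\|X_s\|$ is bounded only by the total leverage of all $s$-edges, roughly $|G|\,w(s)\mathrm{Reff}(s)$. These sum to about $|G|$, which is exactly the naive bound you wanted to avoid. What is missing is the actual key lemma: a $\polylog|G|$ bound on $\sum_s w(s)\|X_s\|$, or on some substitute quantity that drives a sampling or iterative scheme. It has to be proved from the group structure (for instance, the $\log_2|G|$ bound on strict subgroup chains, applied across weight scales), not from per-edge resistances. For $\F_2^n$ a strength/cut-peeling argument gives $\sum_s w(s)\|X_s\|=O(n)$. For general groups this step is precisely the nontrivial content of the cited works; \cite{BasuKLM26}, for instance, pass through an instance-specific ``connectivity threshold'' and a general sparsification theorem for sums of PSD matrices. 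Your $\log^2|G|$ ``conversion loss'' is likewise chosen to hit the target exponent rather than derived.
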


Given these results, it is natural to wonder what the \emph{optimal} possible bound on the size of Cayley sparsifiers is. We settle the optimality question in this paper for \emph{abelian Cayley graphs}:

\begin{theorem}[Optimal Abelian Cayley Sparsification]
\label{thm:cayley-sparsifier}
    Let $G$ be an abelian group and let $\eps\in(0, 1)$. For any (weighted) Cayley graph $\Cay(G, S, w)$ there exists an $\eps$-Cayley sparsifier $\Cay(G, S', w')$ such that 
    \[|S'|\leq O(\eps^{-2}\cdot\log|G|).\] 
    Furthermore $(S', w')$ can be computed in randomized $\poly(|G|, \eps^{-1})$ time.
 \end{theorem}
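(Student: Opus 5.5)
The plan is to diagonalize everything with characters and reduce \cref{thm:cayley-sparsifier} to a vector-balancing problem in $\ell_\infty$.

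\textbf{Reduction.} Since $G$ is abelian, every weighted Cayley Laplacian $L_w$ is diagonalized by the characters $\chi\in\hat G$. Its eigenvalues are
\[\lambda_\chi(w)=\sum_{s\in S} w(s)\,\bigl(1-\Ree\chi(s)\bigr).\]
Hence $\Cay(G,S',w')\approx_\eps \Cay(G,S,w)$ holds if and only if $\lambda_\chi(w')\in[(1-\eps)\lambda_\chi(w),(1+\eps)\lambda_\chi(w)]$ for every $\chi$. For the trivial character and other characters with $\lambda_\chi(w)=0$, this is automatic as long as $S'\subseteq S$.

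First merge $s$ with $s^{-1}$ into a single generator class, so that symmetry of the weights is preserved for free. Then define $f_s(\chi):=1-\Ree\chi(s)\in[0,2]$ and normalized vectors $u_s\in\R^{\hat G}$ by $u_s(\chi)=w(s)f_s(\chi)/\lambda_\chi(w)$, restricted to the characters with $\lambda_\chi(w)>0$. By construction $\sum_s u_s=\mathbf 1$. The goal becomes: find a nonnegative reweighting $c$ supported on $O(\eps^{-2}\log|G|)$ classes with
\[\Bigl\|\sum_s c_s u_s-\mathbf 1\Bigr\|_\infty\le\eps.\]
This is an $\ell_\infty$-sparsification problem for $|S|$ nonnegative vectors in dimension $m=|G|$, and the target size $\log m/\eps^2$ is exactly what one expects from a discrepancy-based argument.

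\textbf{Iterative halving, following Reis--Rothvoss.} I will sparsify by repeated partial coloring. At each round, with current support $T$ of size $n$ and current weights, find $x\in[-1,1]^T$ with a constant fraction of coordinates equal to $\pm1$ and
\[\Bigl|\sum_s x_s\,w(s)f_s(\chi)\Bigr|\le \delta_n\,\lambda_\chi\quad\text{for all }\chi,\qquad \delta_n\approx\sqrt{\log(m/n)/n}\ \text{or}\ \sqrt{\log m/n}.\]
Then replace $w$ by $w\cdot(1+x)$, which zeroes out a constant fraction of the coordinates and keeps the eigenvalues within a $(1\pm\delta_n)$ factor. The errors form a geometric series dominated by the last round, so stopping at $n\approx C\log|G|/\eps^2$ yields total error $O(\eps)$.

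By the Giannopoulos/Gluskin partial coloring lemma, as used by Reis--Rothvoss, each round reduces to a volume (or Gaussian measure) lower bound of the form $\gamma_n(K)\ge e^{-cn}$, or $\vol(K\cap[-1,1]^n)\ge e^{-cn}$, for the symmetric convex body
\[K=\Bigl\{x:\ \bigl|\textstyle\sum_s x_s w(s)f_s(\chi)\bigr|\le\delta_n\lambda_\chi\ \ \forall\chi\Bigr\}.\]
Algorithmically, this can be made constructive with Rothvoss-style or Lovett--Meka random walks in $\poly(|G|,\eps^{-1})$ time. Before starting, I would preprocess by splitting heavy generators into equal-weight copies, or by first running a crude importance-sampling step, so that no single $w(s)f_s(\chi)$ exceeds an $O(1/n)$-fraction of $\lambda_\chi$. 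This balancing may be needed for the volume bound.

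\textbf{Main obstacle: the volume bound.} A naive union bound over $|G|$ strips fails when the strip widths are uneven. This is where character symmetry should enter, and I would try the following first. Write $K=\bigcap_\chi P_\chi$ with strips $P_\chi$, and use Šidák--Khatri so that
\[\gamma_n(K)\ge\prod_\chi\gamma_n(P_\chi).\]
For each strip, the Gaussian variance is $\sum_s w(s)^2f_s(\chi)^2\le 2\max_s w(s)\cdot\lambda_\chi$, so $\gamma(P_\chi)\ge 1-\exp\bigl(-\delta_n^2\lambda_\chi/(4\max_s w(s))\bigr)$. To control $\sum_\chi\exp(-\Theta(\lambda_\chi))$, use that $\mathbb E_\chi f_s(\chi)=1$ for $s\neq0$, and more strongly that the multiset $\{\lambda_\chi\}$ is controlled by group structure. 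Characters with small $\lambda_\chi$ must nearly annihilate most of $S$, i.e.\ they lie near an annihilator subgroup, and the number of such characters can be bounded by a counting argument over $\hat G$. This should give $\sum_\chi \log(1/\gamma(P_\chi))\lesssim n$ once $\delta_n^2 n\gtrsim\log|G|$.

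If the pure Gaussian product bound is too lossy, I would instead bound $\vol(K\cap[-1,1]^n)$ directly. The idea is to average $\prod_\chi\1_{P_\chi}$ over $x$ using the orthogonality relation $\sum_\chi\chi(s)=0$. That identity is the short elementary input the abstract alludes to, and establishing this step cleanly is where I expect the real work to lie.
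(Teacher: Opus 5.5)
\paragraph{Verdict: genuine gap.} Your outer framework matches the paper's: diagonalize by characters, shrink the support by a constant factor per round with the Reis--Rothvoss partial-coloring theorem (\cref{thm:good-partial-coloring}), let the per-round errors $\asymp\sqrt{\log_2|G|/m_t}$ sum geometrically, and symmetrize at the end. The gap is the step that carries all the content, the volume lower bound, and the route you sketch for it fails. Write $\sigma_\chi^2=\sum_s w(s)^2f_s(\chi)^2$ for the strip variance.

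\emph{Your variance estimate is too lossy.} The bound $\sigma_\chi^2\le 2w_{\max}\lambda_\chi$ badly overestimates $\sigma_\chi$ for characters nearly trivial on all of $S$. Such characters are numerous, so no counting over $\hat{G}$ rescues it. For $G=\mathbb{Z}_N$, $S=\{\pm1,\dots,\pm r\}$ with unit weights and $\log^2N\le r\le N^{1/3}$, one has $\lambda_{\chi_j}=\Theta(j^2r^3/N^2)$ for small $|j|$. Hence $\delta_n^2\lambda_{\chi_j}\le 1$ for $\Theta(N/(r\sqrt{\log N}))\gg n$ characters, and each contributes at least a constant to your bound on $\sum_\chi\log(1/\gamma_n(P_\chi))$.

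\emph{Exact variances do not save the \v{S}id\'ak product either.} Take $G=\F_2^k$, $S=G\setminus\{0\}$, $w(e_i)=2^k$ and all other weights $1$. Then $n=2^k-1$; take $\delta_n^2=\log_2|G|/n=k/n$. The character $\chi_a(s)=(-1)^{a\cdot s}$ has $\lambda_{\chi_a}^2/\sigma_{\chi_a}^2=\Theta(|a|)=O(k)$, so every strip has Gaussian measure $O(k2^{-k/2})$ and the product is $e^{-\Theta(n\log n)}$. Yet by the paper this body has volume at least $2^{-5n}$, hence Gaussian measure $e^{-O(n)}$. \v{S}id\'ak discards exactly the correlation created by the $k$ heavy coordinates shared by all strips.

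\emph{Your balancing step cannot remove this.} After splitting each $e_i$ into $q$ copies, $n\ge kq$, while each copy still carries about a $2/(3q)\ge 2k/(3n)$ fraction of $\lambda_{\chi_{e_i}}$. So reaching $O(1/n)$-balance costs an $\Omega(\log|G|)$ factor, which is precisely the loss the theorem avoids. The orthogonality fallback is not an argument as stated.

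\paragraph{What is missing: the paper's two ideas.}

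(i) Do not lower-bound the symmetric body directly. Lower-bound the \emph{one-sided} body $Q_0^+=\{x\in[-1,1]^S:\sum_s x_sw(s)\lambda(\chi,s)\le 0\ \forall\chi\}$, and all its coordinate sections, by relative volume $1/N$. Then apply \cref{thm:large-symmetric-subset} with $p=1/N$ to $K=Q_\eps^+$, which contains $Q_0^+$ and $[-\eps,\eps]^m$. This gives $\vol(Q_\eps)\ge 2^{-5m}$ as soon as $\eps^2m\ge\log_2N$, which is exactly where the optimal $O(\eps^{-2}\log|G|)$ comes from.

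(ii) The $1/N$ bound (\cref{thm:volume-claim}) is a symmetry argument, not a tail bound. Note that $x\in Q_0^+$ iff $F(\hat{0})\le\min_\chi F(\chi)$, where $F(\chi)=\sum_s w(s)x_s\Ree\chi(s)$. Realize each uniform $x_s$ as the real part of a rotation-invariant $Z_s\in\C$ (\cref{fact:rot-uniform}). Then $\tilde F(\chi)=\sum_s w(s)\Ree(Z_s\chi(s))$ is invariant under translations of $\hat{G}$, so its minimum is attained at $\hat{0}$ with probability at least $1/N$. Comparing $\hat{0}$ with both $\chi$ and $-\chi$ cancels the imaginary parts, leaving $\sum_s w(s)X_s(1-\Ree\chi(s))\le 0$ for every $\chi$. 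This needs no balancing or counting, and the bound is independent of the weights.
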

 \begin{remark}
     We mention a few salient points about the theorem:
     \begin{enumerate}[(1)]
        \item \cite{KhannaPS24} shows that \emph{code sparsification} of $\F_2$-linear codes is equivalent to sparsifying Cayley graphs over $\F_2^n$. Consequently, $n$-dimensional $\F_2$-linear codes admit $O(n/\eps^2)$-sized code sparsifiers, thus removing the log factors in \cref{thm:kps}. 
        \item Note that the result sparsifies weighted Cayley graphs without \emph{any} dependence on the weight function itself.
     \end{enumerate}
\end{remark}

Since \cref{thm:cayley-sparsifier} is about abelian Cayley graphs, throughout the rest of the paper we assume our groups to be abelian, and we use additive notation to represent it, i.e. $G = (G, + , 0)$ refers to an abelian group whose identity is written as $0$.

\parhead{Optimality of \cref{thm:cayley-sparsifier}} The dependence of $\log(N)$ in \cref{thm:cayley-sparsifier} is optimal. This result was essentially present in the work \cite{BasuKLM26}, and we formally record it below:

\begin{theorem}[Optimality of Abelian Cayley Sparsification]
    Let $G$ be an arbitrary abelian group of size $N$. Then for every $\eps\in(0, 1)$, there exists a constant $c(\eps) > 0$ and a subset $S\subset G$ of size $|S|\geq c(\eps)\log N$ such that $\Cay(G, S)$ does not admit any $\eps$-sparsifier of size $< |S|$, i.e. $\Cay(G, S)$ can \textbf{not} be $\eps$-sparsified.
\end{theorem}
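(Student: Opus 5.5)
The plan is to build $S=\{\pm s_1,\dots,\pm s_m\}$ with $m=c(\eps)\log N$, chosen so that the characters of $G$ restricted to $S$ behave like the full torus $\mathbb{T}^m$. The spectrum of $\Cay(G,S)$ would then look locally like the function $\theta\mapsto\sum_i(1-\cos\theta_i)$. I would then use a second-order (Hessian) argument at $\theta=0$ to show that any $\eps$-sparsifier has at least $m$ symmetric pairs of generators, and hence has size at least $|S|$.

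Everything is phrased through characters. For abelian $G$, the Laplacians $L$ and $L'$ are simultaneously diagonalized by the characters $\chi\in\hat G$, with eigenvalues
\[\lambda_\chi=\sum_{s\in S}w(s)\bigl(1-\Ree\chi(s)\bigr),\qquad \lambda'_\chi=\sum_{g\in S'}w'(g)\bigl(1-\Ree\chi(g)\bigr).\]
So $\Cay(G,S',w')\approx_\eps\Cay(G,S)$ iff $(1-\eps)\lambda_\chi\le\lambda'_\chi\le(1+\eps)\lambda_\chi$ for every $\chi$.

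\textbf{Step 1 (support lies in $\langle S\rangle$).} Any character trivial on $S$ has $\lambda_\chi=0$, hence $\lambda'_\chi=0$. Therefore every $g\in S'$ with $w'(g)>0$ is killed by all characters that annihilate $\langle S\rangle$, so $g\in\langle S\rangle$. Writing $H=\langle S\rangle$, each such $g$ can be expressed as $g=\sum_j a_{g,j}s_j$ with integer coefficients $a_{g}\in\mathbb{Z}^m$.

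\textbf{Step 2 (choice of $S$: approximate independence).} I would pick $s_1,\dots,s_m$ at random in $G$ with $m=c(\eps)\log N$, for a small constant $c(\eps)$. The goal is that the map $\Phi:\hat G\to\mathbb{T}^m$, $\chi\mapsto(\chi(s_1),\dots,\chi(s_m))$, is $\delta$-dense in $\mathbb{T}^m$. This should hold because $N\gg(1/\delta)^m$, using a second-moment or Fourier count over the $N$ characters. As a cleaner alternative, I could select the $s_j$ inside a large cyclic or $p$-group factor of $G$ so that $\Phi$ has large image near the identity.

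Given density, for each $g\in S'$ I would fix an integer representative $a_g$ of bounded size. I would then argue that $\chi(g)=\exp\bigl(i\langle a_g,\theta\rangle\bigr)$ whenever $\Phi(\chi)=(e^{i\theta_1},\dots,e^{i\theta_m})$ with $\theta$ small. Comparing $\lambda'$ on characters with $\theta\approx0$ pins down the behaviour of $\chi(g)$ there, which is how the representative is fixed consistently.

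\textbf{Step 3 (quadratic rank argument).} Restricted to characters with small $\theta$, the sparsifier inequality reads
\[(1-\eps)\sum_i(1-\cos\theta_i)\;\le\;\sum_{g}w'(g)\bigl(1-\cos\langle a_g,\theta\rangle\bigr)\;\le\;(1+\eps)\sum_i(1-\cos\theta_i).\]
Dividing by $|\theta|^2$ and letting $\theta\to0$ along the dense set of available characters gives
\[\sum_g w'(g)\,a_ga_g^{\top}\succeq(1-\eps)I_m .\]
This is where density at scale $\delta$ must beat the cubic error terms; that is why $m$ has to be a small multiple of $\log N$, so that $\delta$ can be taken tiny. The matrix on the left is a sum of rank-one terms, and $g$ and $-g$ contribute the same term $a_ga_g^\top$. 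Hence $S'$ must contain at least $m$ distinct pairs $\{\pm g\}$, i.e. at least $2m=|S|$ generators once elements of order $2$ are handled by a parity adjustment. So no sparsifier of size $<|S|$ exists, and $|S|\ge c(\eps)\log N$.

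\textbf{Main obstacle.} The hard part is Step 2 together with its use in Step 3. I need a character family dense enough near $\theta=0$, for an arbitrary abelian $G$, to legitimately take the second-order limit. I also need $\chi(g)$ to be a genuine linear function $\langle a_g,\theta\rangle$ on that family, rather than merely being determined up to torsion. If random $S$ is awkward, I would first try taking $s_j$ as multiples $\lfloor q^{j}\rfloor$-spaced elements in a cyclic subgroup of $G$ of order at least $N^{1/\text{(rank)}}$. If $G$ instead has large rank, I would use independent coordinates of an elementary-abelian subgroup. In that case characters are $\pm1$-valued, and Step 3 should be replaced by the analogous Fourier/linear-algebra rank count over $\F_p$: sign patterns of $\chi$ on $S$ force the vectors $a_g$ to span $\F_p^m$.
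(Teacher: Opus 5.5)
Your argument has a genuine gap at Step 3, and that is where all the difficulty lies.

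\textbf{No limit $\theta\to 0$ is available.} In a finite group $\Phi(\hat G)$ is a finite subset of $\mathbb{T}^m$. For your concrete choice $S=\{\pm q^j\}_{0\le j<m}$ in $\mathbb{Z}_{q^m}$, every nonzero character stays a fixed distance from $0$. If $d_l$ is the lowest nonzero balanced base-$q$ digit of $k$, then $\chi_k(q^{m-l-1})=e^{2\pi i d_l/q}$, so some $|\theta_j|\geq 2\pi/q$. Hence ``density beats the cubic error'' has to be a quantitative statement at scale $|\theta_j|\approx 1/q$. At that scale, replacing $1-\cos\langle a_g,\theta\rangle$ by $\frac12\langle a_g,\theta\rangle^2$ is not justified. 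A typical $g\in\mathbb{Z}_{q^m}$ has $\Theta(m)$ balanced digits of size $\Theta(q)$, and by counting most $g$ have no representative with $\|a_g\|_1=o(mq)$. So nothing forces $\langle a_g,\theta\rangle$ to be small.

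\textbf{The representatives $a_g$ are not well defined.} Each $a_g$ is only determined modulo the relation lattice $\{a\in\mathbb{Z}^m:\sum_j a_js_j=0\}$, which in a finite group has full rank $m$. So the matrix $\sum_g w'(g)a_ga_g^{\top}$, and its rank, depend on which representatives you pick. Your sentence that $\lambda'$ ``pins down'' the representative is not an argument. In particular it gives no control over generators of small weight. The upper inequality barely constrains those, yet they are exactly the ones that could supply the missing rank. So the key inequality $\sum_g w'(g)a_ga_g^{\top}\succeq(1-\eps)I_m$ is not established, and the count of $m$ pairs does not follow.

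\textbf{What survives.} Step 1 is fine, and so is the rank count in groups of large rank. For example, in $\F_p^n$ with $S$ a symmetrized basis, the condition $\langle S'\rangle=\langle S\rangle$ already forces $|S'|\geq|S|$.

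\textbf{Why the cyclic case needs a new idea.} In a group like $\mathbb{Z}_{2^n}$, every subset $T$ contains a single element generating $\langle T\rangle$. Kernel considerations therefore give nothing there. The whole lower bound must come from a finite-scale analytic argument that can detect one missing direction among $m$. The obvious quantitative substitute does not do this. It combines Dirichlet's pigeonhole (a nonzero $\chi$ with $|\arg\chi(g)|\leq 2\pi N^{-1/k}$ on any $k$ pairs) with the bound $\sum_g w'(g)\leq(1+\eps)|S|$, obtained by averaging over $\hat G$. This only gives bounds of Alon--Roichman type: $|S'|\geq c'(\eps)\log N$ for the complete graph, and only of order $\log N/\log\log N$ for your $S$. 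It does not give $|S'|\geq|S|$, and your proposal does not supply the missing idea.

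For comparison, the paper does not argue this from scratch. It obtains the statement by combining Theorem~3.4 and Lemma~4.6 of \cite{BasuKLM26}.
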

\begin{proof}
    The claim follows by combining \cite[Theorem~3.4 and Lemma~4.6]{BasuKLM26}.\footnote{The cited results can be found in the arXiv version of the paper}
\end{proof}

Beyond just the theory of sparsification, Cayley sparsifiers have connections to many other areas of computer science, and as such the optimality of \cref{thm:cayley-sparsifier} leads to some interesting observations in those areas. For instance, a famous theorem of Alon-Roichman \cite{AlonR94} states that the complete graph on $N$ vertices admits an $\eps$-\emph{Cayley sparsifier} of size $O(\eps^{-2}\log N)$ for any group $G$ of size $N$. Equivalently, the Alon-Roichman theorem furnishes $O(\eps^{-2}\log N)$-sized Cayley sparsifiers for $\Cay(G, S)$ where $S = G\setminus\{\id_G\}$, and thus \cref{thm:cayley-sparsifier} generalizes the Alon-Roichman theorem for abelian groups $G$, since \cref{thm:cayley-sparsifier} obtains $O(\eps^{-2}\log N)$-sized Cayley sparsifiers for $\Cay(G, S)$ for \emph{arbitrary} symmetric sets $S\subset G$. 

Another connection along these lines is the following: Unweighted\footnote{here by unweighted we mean that all elements of the generating set have the same weight} Cayley sparsifiers of the complete graph $\Cay(\F_2^n, \F_2^n)$ naturally correspond to $\F_2$-linear $\eps$-biased codes. Consequently, a deterministic construction of a $O(n/\eps^2)$-sized unweighted Cayley sparsifier for $\Cay(\F_2^n, \F_2^n)$ would imply an explicit construction of an optimal $\eps$-biased $\F_2$-linear code! Note that \cref{thm:cayley-sparsifier} obtains the correct size bound of $O(\eps^{-2}n)$, but unfortunately computing sparsifiers via \cref{thm:cayley-sparsifier} seems to take time $\poly(|G|) = 2^{\Omega(n)}$ in general. \cite{KhannaPS25} obtains a $O(\eps^{-2}n\polylog(n))$-sized sparsifier in $\poly(n)$ time, but unfortunately their algorithm is randomized. Obtaining the correct size tradeoff with a deterministic $\poly(n)$ time algorithm thus remains a significant challenge. 

For further connections between Cayley sparsifiers, pseudorandomness and complexity theory, we refer the reader to \cite{JalanM21}.

\section{Preliminaries}

\parhead{Groups and Characters} Let $G = (G, +, 0)$ be an abelian group of size $N$. A map $\chi:G\to\C^*$ is said to be a character of $G$ if $\chi(g + g') = \chi(g)\chi(g')$ for all $g, g'\in G$. We refer to as $\hat{0}$ the \emph{trivial character} corresponding to the map $G\ni g\mapsto 1\in\C^*$. 

An abelian group of size $N$ possesses exactly $N$ characters, whose collection we denote as $\hat{G}$. $\hat{G}$ can be equipped with the structure of an abelian group, where for $\chi, \chi'\in\hat{G}$ we define 
\[\left(\chi +_{\hat{G}} \chi'\right)(g):= \chi(g)\chi'(g).\]
It is easily seen that $\hat{G} = (\hat{G}, +_{\hat{G}}, \hat{0})$ is an abelian group, with the trivial character $\hat{0}$ serving as the group identity. Henceforth we shall treat $\hat{G}$ as an abstract abelian group and write $\pm_{\hat{G}}$ simply as $\pm$.

A set $S\subset G$ is said to be symmetric if $S = -S$. A map $f:G\to\C$ is said to be symmetric if $f(g) = f(-g)$ for all $g\in G$.

\parhead{Graph Laplacians} Let $\cG = \cG(V, E, w)$ be a weighted undirected graph, where $w:E\to\R_{\geq 0}$ is some weight function. We shall view unweighted graphs simply as possessing the weight function $w:E\to\{1\}$. Define the \emph{degree matrix} $D_\cG$ of $\cG$ to be a diagonal matrix in $\R^{V\times V}$ with $D_\cG(v, v):= \deg(v) = \sum_{e: v\in e}w(e)$ being equal to the degree of the vertex $v$ in $\cG$. Also define the adjacency matrix $A_\cG\in\R^{V\times V}$ as $A_\cG(v, v'):= \1(\{v, v'\}\in E(\cG))\cdot w(\{v, v'\})$. Then the \emph{Laplacian} of $\cG$ is defined simply to be $D_\cG - A_\cG$. 

We recall the following standard fact about the Laplacian eigenvalues of \emph{abelian} Cayley graphs $\Cay(G, S, w)$. For any character $\chi\in\hat{G}$ and $g\in G$, define $\lambda(\chi, s):= 1 - \Ree\chi(s)$. 
\begin{fact}[Abelian Cayley Laplacian Eigenvalues]
\label{fact:cayley-eig}
    Let $L\in\R^{G\times G}\subset\C^{G\times G}$ be the Laplacian of $\Cay(G, S)$, where $G$ is an abelian group, and $S\subset G$ is symmetric. The characters in $\hat{G}$ form an eigenbasis of $L$, with the eigenvalue corresponding to $\chi\in\hat{G}$ being $\sum_{s\in S}w(s)\lambda(\chi, s)$. 
\end{fact}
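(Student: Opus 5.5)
The plan is to verify the eigenvector equation $L\chi = \mu_\chi \chi$ directly, one character at a time, and then appeal to orthogonality of characters for the eigenbasis claim. I will work in the weighted setting $\Cay(G,S,w)$, since the statement's eigenvalue formula already contains $w$. The unweighted case is then $w\equiv 1$.

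\textbf{Step 1: the Laplacian acting on a function.} Since edges join $g$ and $g+s$ for $s\in S$, the degree of every vertex is $d:=\sum_{s\in S} w(s)$. Hence for any $f:G\to\C$,
\[(Lf)(g) = d\,f(g) - \sum_{s\in S} w(s) f(g+s).\]
One technical point is that $s$ and $-s$ index the same edge $\{g,g+s\}$ when $s\neq -s$. I will adopt the standard convention that the adjacency entry $A(g,g')$ equals $w(g'-g)$, which is well defined because $w$ is symmetric. With this convention the displayed formula is exactly $D-A$. Elements with $2s=0$ also cause no issue under this convention.

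\textbf{Step 2: plug in a character.} Take $f=\chi$. Multiplicativity gives $\chi(g+s)=\chi(g)\chi(s)$, so
\[(L\chi)(g) = \Big(\sum_{s\in S} w(s)\,(1-\chi(s))\Big)\chi(g).\]
Thus $\chi$ is an eigenvector. Its eigenvalue is $\mu_\chi=\sum_s w(s)(1-\chi(s))$.

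\textbf{Step 3: the eigenvalue is real and has the stated form.} Because $S$ is symmetric and $w(s)=w(-s)$, I can pair $s$ with $-s$. Characters of finite groups take values in the roots of unity, so $\chi(-s)=\overline{\chi(s)}$. Pairing therefore replaces each $\chi(s)$ by $\tfrac12(\chi(s)+\overline{\chi(s)})=\Ree\chi(s)$. This gives $\mu_\chi=\sum_s w(s)\lambda(\chi,s)$.

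\textbf{Step 4: the characters form a basis.} The characters are pairwise orthogonal under $\langle f,h\rangle=\sum_g f(g)\overline{h(g)}$. To see this, for $\psi=\chi-\chi'\neq\hat 0$ choose $h$ with $\psi(h)\neq 1$. Shifting the summation variable shows $\sum_g\psi(g)=\psi(h)\sum_g\psi(g)$, which forces $\sum_g\psi(g)=0$. Hence distinct characters are orthogonal, so they are linearly independent. There are $N$ of them, as recalled in the preliminaries, so they form a basis of $\C^G$.

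The only step needing care is the edge-counting convention in Step 1. Everything else is routine.
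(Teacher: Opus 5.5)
Your proof is correct and complete. The paper gives no proof of this fact and simply recalls it as standard; your argument is exactly the standard verification it relies on: multiplicativity makes $\chi$ an eigenvector of $D-A$ with $A(g,g')=w(g'-g)$, the symmetry $S=-S$, $w(s)=w(-s)$ together with $\chi(-s)=\overline{\chi(s)}$ turns $\sum_s w(s)(1-\chi(s))$ into $\sum_s w(s)\lambda(\chi,s)$, and orthogonality of the $N$ characters gives a basis. One small wording fix: at a fixed vertex $g$, an element $s\neq -s$ and its inverse give distinct neighbors $g+s\neq g-s$, so what actually coincides is the edge described by $(g,s)$ and by $(g+s,-s)$, which your convention $A(g,g')=w(g'-g)$ already handles correctly.
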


For a complex number $z = x + iy$, where $x, y\in\R$ and $i = \sqrt{-1}$, we define $\Ree(z):= x$ and $\Imm(z):= y$. 

A set $K\subset\R^m$ is called a convex body if it is a closed compact convex set. For any $\lambda\in\R$, write $\lambda K:= \{\lambda x:x\in K\}$. $K$ is called centrally symmetric if $K = -K$. For a convex body $K\in\R^m$, and any $T\seq[m]$, write 
\[K_T:= \{x\in\R^T: \R^T\times\{0\}^{[m]\setminus T}\ni(x, 0)\in K\}.\]

Let $\vol_S$ denote the usual Lebesgue measure on $\R^S$, i.e. $\vol_S([0, 1]^S) = 1$. If $|S| = m$, we also sometimes write $\vol_S$ as $\vol_m$, or simply $\vol$, if $S$ is clear from the context.

For two vectors $v, w\in\R^n$, define $v\odot w\in\R^n$ to be their coordinate-wise product, i.e. $(v\odot w)_i:= v_i\cdot w_i$ for all $i\in[n]$.

We also define the notion of group invariant random processes, which plays an important role in our proof:
\begin{definition}[Group Invariant Random Processes]
    Let $H$ be an abelian group. We say that $W:= (W_h)_{h\in H}$ is a $H$-invariant random process if $(W_h)_{h\in H}$ are random variables such that for any $h_0\in H$, the joint distribution of $(W_{h + h_0})_{h\in H}$ is the same as the joint distribution of $(W_h)_{h\in H}$.
\end{definition}

Finally, we note an elementary fact about probability theory:
\begin{fact}[Rotationally Invariant Measure with Uniform Marginal]
\label{fact:rot-uniform}
    There exists a distribution $\nu = (X, Y)$ supported on $\R^2$ which is rotationally invariant and the distribution of (the marginal random variable) $X$ is the uniform distribution on $[-1, 1]$. For the sake of concreteness, 
    \[\R^2\ni z\mapsto\frac{\1(|z|\leq 1)}{2\pi\sqrt{1 - |z|^2}} \]
    is the density function of such a distribution. Here $|\cdot|$ stands for the usual $\ell_2$ norm in $\R^2$. 
\end{fact}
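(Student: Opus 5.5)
This is a fact from elementary probability, so the plan is to exhibit the distribution and verify its two properties directly. The natural candidate is the pushforward of the uniform measure on the unit sphere $S^2\subset\R^3$ under the projection $(x,y,z)\mapsto(x,y)$. This is Archimedes' hat-box theorem in disguise.

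\textbf{Rotational invariance.} The uniform measure on $S^2$ is invariant under all rotations of $\R^3$. In particular it is invariant under rotations about the $z$-axis. These commute with the projection onto the $(x,y)$-plane and act there as arbitrary planar rotations. Hence the projected measure $\nu$ on $\R^2$ is rotationally invariant.

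\textbf{Uniform marginal.} By the same symmetry of $S^2$, the marginal of the first coordinate $X$ has the same law as the $z$-coordinate of a uniform point on $S^2$. Archimedes' theorem says this law is uniform on $[-1,1]$. Concretely, the band $\{a\le z\le b\}$ has surface area $2\pi(b-a)$: in cylindrical coordinates the area element is $dz\,d\phi$, because the factor $\sqrt{1-z^2}$ from the circle's circumference cancels the slant factor $1/\sqrt{1-z^2}$.

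\textbf{Matching the stated density.} It remains to check that $\nu$ has the formula given in \cref{fact:rot-uniform}. The upper hemisphere is the graph of $z=\sqrt{1-|u|^2}$ over the unit disk. Its area element there is $du/\sqrt{1-|u|^2}$. Two hemispheres cover the disk, and the total area is $4\pi$, so the density of $\nu$ is $2\cdot\frac{1}{4\pi\sqrt{1-|u|^2}}=\frac{1}{2\pi\sqrt{1-|u|^2}}$ on $|u|\le 1$, as claimed.

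As an independent check, one can compute the $X$-marginal directly from this density:
\[\int_{-\sqrt{1-x^2}}^{\sqrt{1-x^2}}\frac{dy}{2\pi\sqrt{1-x^2-y^2}}=\frac{\pi}{2\pi}=\frac12,\]
using $\int_{-a}^{a}\frac{dy}{\sqrt{a^2-y^2}}=\pi$. This is the uniform density on $[-1,1]$.

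There is no real obstacle in this argument. The only step requiring care is the Jacobian or area-element computation, and either route above settles it.
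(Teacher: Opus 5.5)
Your proof is correct. Both area-element computations check out: $dz\,d\phi$ in cylindrical coordinates, and $du/\sqrt{1-|u|^2}$ for the hemisphere graph. The final marginal integral is also right.

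The paper gives no proof of this fact; it just writes down the density. The implicit verification is your last paragraph plus one observation you leave unstated there: the density depends only on $|z|$, so it is rotationally invariant. It also integrates to $1$, because its $X$-marginal does. So the direct check alone proves the statement.

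What your hat-box construction adds is an explanation of where the density comes from. Seeing $\nu$ as the projection of the uniform measure on $S^2$ makes both properties conceptually clear without any integration.
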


\section{Proof Overview}
We next highlight the main ideas behind the proof of \cref{thm:cayley-sparsifier}. 

Fix a weighted Cayley graph $\Cay(G,S,w)$. For brevity, let $N = |G|$. As a first step, following the work of \cite{ReisR26}, the first idea is that instead of building the sparsifier in one shot, we show that as long as $|S| \gg \log N/\varepsilon^2$, we can shrink the size of $|S|$ by a constant factor while incurring low-error. We have to do some bookkeeping to make sure the errors don't add up; this can be done by choosing the parameters appropriately. For now, let us focus on a single step of this shrinkage. 

To this end, for a desired error $\epsilon$, let us define the \emph{sparsification polytope}:   
\[
\begin{aligned}\label{eq:sparspoly}
Q_{\eps}:=\Bigl\{X\in[-1,1]^S:\;&
(1-\eps)\sum_{s\in S}w(s)\lambda(\chi,s)
\le
\sum_{s\in S}(1+X_s)w(s)\lambda(\chi,s)\\
&
\le
(1+\eps)\sum_{s\in S}w(s)\lambda(\chi,s)
,\quad\forall\,\chi\in\hat{G}
\Bigr\}.
\end{aligned}
\]

That is $Q_\eps$ is the set of all vectors $X \in [-1,1]^S$ such that 
for the new weight function $w_X:S \rightarrow \R$ defined by $w_X(s) = w(s) (1 + X_s)$, we have $\Cay(G,S,w) \approx_\eps \Cay(G,S,w_X)$.

Observe that $Q_\eps$ is clearly a polytope and it is also easy to see that it is symmetric. The key step is to show that as long as $|S|$ is sufficiently large, there exists a vector $X \in \alpha Q_\eps \cap [-1,1]^S$ such that many of the coordinates of $X$ are $-1$ (i.e., $X$ is a good \emph{partial coloring}). If this is true, then replacing $w$ by $w_X$ shrinks the support of $S$ significantly. To do so, we use a volume argument: If the volume of $Q_\eps$ is sufficiently large, then (essentially by Minkowski's theorem), it should contain many lattice points and hence also many points with one of the coordinates being $-1$. Such volume arguments have been used in discrepancy literature before (e.g., \cite{Rothvoss17}). 

We use the following concrete form of this argument from \cite{ReisR26}:
\begin{theorem}[Theorem 7 in \cite{ReisR26}]
\label{thm:good-partial-coloring}
    For every constant $c > 0$, there exists a constant $\alpha = \alpha(c) > 0$ for which the following statement is true: For any centrally symmetric convex body $K\seq[-1, 1]^m$ with $\vol_m(K)\geq c^m$, there exists $x\in \alpha K\cap [-1, 1]^m$ such that $\#\{i\in[m]:x_i = -1\}\geq m/4$. Furthermore such an $x$ can be found in randomized $\poly(m)$ time given a separation oracle for $K$. 
\end{theorem}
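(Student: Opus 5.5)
We would prove \cref{thm:good-partial-coloring} in the Gluskin/Giannopoulos tradition, made algorithmic via Rothvoss's random projection \cite{Rothvoss17}.

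\textbf{Step 1: shifted copies of $\alpha K$.} Let $\alpha\geq 1$ be a large constant depending on $c$, to be fixed later. Consider the body $\alpha K$. Its volume is at least $(\alpha c)^m$, which is large compared to $2^m=\vol([-1,1]^m)$. Since $\alpha K$ is convex and centrally symmetric, this suggests looking for a point of $\alpha K\cap[-1,1]^m$ that is far from $0$ in many coordinates. To that end, sample a random shift $y$ as follows. Each coordinate $y_i$ is $-1$ independently with probability $1/2$, and otherwise it is uniform in $[-1,1]$.

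\textbf{Step 2: a volume-based existence argument.} Consider the set $Y=\{y\in\{-1\}^T\times[-1,1]^{[m]\setminus T}:|T|\geq m/4\}$. We want to show that $\alpha K$ meets $Y$. The argument runs through the following chain.
\begin{itemize}
\item[(a)] First we need a lower bound on sections: $\vol_{m-|T|}\bigl((\alpha K)_{[m]\setminus T}\bigr)\geq (\alpha c/C)^{m}$ for most large sets $T$. Our approach is to use Brunn--Minkowski/Fubini together with the cube constraint $K\subseteq[-1,1]^m$. That constraint bounds projections of $K$ by $2^{|T|}$, and the Rogers--Shephard inequality then gives $\vol(K_{[m]\setminus T})\geq \vol(K)\big/\bigl(\binom{m}{|T|}2^{|T|}\bigr)$.
\item[(b)] Next we pass from a fixed shift to the translate $K+y$: by Brunn--Minkowski, $\vol(K\cap(K+y))\geq 2^{-m}\vol(2K\cap\ldots)$, so $(\alpha K)\cap(\alpha K-y)$ has large volume.
\item[(c)] Averaging over $y$ then yields $x,x'\in\alpha K$ with $x-x'$ having $\geq m/4$ coordinates equal to $\pm$ something. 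Central symmetry turns this into the midpoint $(x-x')/2\in\alpha K$.
\end{itemize}
Cleaner is Rothvoss's formulation: let $x$ be the Euclidean projection of a Gaussian $g\sim N(0,I_m)$ onto $\alpha' K\cap[-1,1]^m$. He proves that whenever $\Pr[g\in\alpha' K]\geq e^{-\delta m}$, the projected point has $\Omega(m)$ coordinates at $\pm1$. The needed Gaussian measure bound follows from the volume bound: $K\subseteq[-1,1]^m$ with $\vol\geq c^m$ implies $\gamma_m(\alpha K)\geq e^{-\delta m}$ for $\alpha$ large, by comparing the Gaussian density on $[-\alpha,\alpha]^m$ and scaling.

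\textbf{Step 3: making the tight coordinates $-1$ and reaching $m/4$.} Rothvoss's argument gives coordinates equal to $\pm1$, not specifically $-1$, and only a small constant fraction of them. Two repairs are needed.
\begin{itemize}
\item To get the sign $-1$, use symmetry: if $x$ has $k$ coordinates in $\{\pm1\}$, then either $x$ or $-x$ has at least $k/2$ of them equal to $-1$.
\item To boost the fraction to $m/4$, iterate. Fix the tight coordinates, pass to the section $K_{T}$ (whose volume is controlled by (a)), and recurse, summing the colorings. After $O(1)$ rounds the count reaches $m/4$. Since the cube constraint $[-1,1]^m$ is imposed separately each round, the accumulated point is at most $O(1)\alpha K$ by convexity, so adjust $\alpha$.
\item The sum must stay in $[-1,1]$. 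We ensure this by only modifying free coordinates, within their residual interval.
\end{itemize}
The algorithm is then immediate: Gaussian sampling plus convex projection using the separation oracle, in $\poly(m)$ time.

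\textbf{Main obstacle.} The hardest step is (a) together with the iteration: the sections $K_T$ must retain volume $c'^{|[m]\setminus T|}$ with a constant $c'$ not degrading over rounds, and this is where the $\binom{m}{|T|}$ loss must be absorbed by choosing $\alpha$ large.
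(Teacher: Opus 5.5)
\textbf{Verdict: there is a genuine gap.} The paper gives no proof of this statement; it is quoted from Reis--Rothvoss, so there is no in-paper argument to compare against. Judged on its own, your proposal fails at its load-bearing step.

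\textbf{The Gaussian-measure reduction is false.} You reduce to Rothvoss's Gaussian partial-coloring theorem via the claim that $K\seq[-1,1]^m$ and $\vol_m(K)\geq c^m$ imply $\gamma_m(\alpha K)\geq e^{-\delta m}$ for a large constant $\alpha$, where $\gamma_m$ is the standard Gaussian measure. Take $K=\{x\in[-1,1]^m:|x_1|\leq (c/2)^m\}$, which has volume exactly $c^m$. For every constant $\alpha$ and $g\sim N(0,I_m)$,
$\gamma_m(\alpha K)\leq \Pr[|g_1|\leq \alpha (c/2)^m]\leq \alpha (c/2)^m$.
This is far below $e^{-\delta m}$ once $c<2e^{-\delta}$ and $m$ is large.

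Comparing densities on $[-\alpha,\alpha]^m$ only gives $\gamma_m(\alpha K)\geq (\alpha c\, e^{-\alpha^2/2}/\sqrt{2\pi})^m$. That is $e^{-C(c)m}$ with $C(c)\to\infty$ as $c\to 0$, and the bound gets \emph{worse} as $\alpha$ grows. No constant dilation turns measure $e^{-Cm}$ into $e^{-\delta m}$ for the small absolute $\delta$ that Rothvoss's theorem needs, because thin slabs stay thin.

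Yet this $K$ satisfies the conclusion trivially, since $(0,-1,\dots,-1)\in K\seq\alpha K$. So the volume hypothesis must be used some other way. One option is to first find a subspace of dimension $\Omega(m)$ on which a dilate of $K$ has large Gaussian measure, then apply a subspace version of partial coloring. Coordinate sections alone do not suffice: for the slab $\{x\in[-1,1]^m:|\langle \1,x\rangle|\leq \eta\sqrt m\}$ with $\eta=(c/2)^m$ (volume of order $c^m$), every coordinate section is again a thin slab. Another option is a direct entropy or volume argument.

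Your proposal contains no such ingredient, and Step 2(b)--(c) does not supply one. The inequality in (b) is not a complete statement. Step (c) only yields $x-x'$ with many coordinates ``equal to $\pm$ something,'' not equal to $-1$.

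\textbf{Step 3 has a second, independent gap.} The sign flip $x\mapsto -x$ is legitimate only when the box is symmetric about the current point, i.e.\ in the first round. Once later increments are confined to the residual intervals $[-1-x_i,1-x_i]$, you can no longer negate an increment. The newly tight coordinates may then all land at $+1$, and those do not count. If instead you impose $[-1,1]$ on each increment separately (your first bullet), the sum can leave the cube. So the iteration gives no control on the number of $-1$ coordinates, let alone the specific threshold $m/4$. That threshold is more naturally reached in one shot, as half of $m/2$ tight coordinates after a single sign flip.

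\textbf{The obstacle you single out is not the real one.} For centrally symmetric $K$, Fubini together with Brunn's principle (the central section is the largest parallel section) gives $\vol(K_{[m]\setminus T})\geq \vol(K)/2^{|T|}$ with no binomial loss. Rogers--Shephard is the reverse inequality and is not the source of this bound. The real difficulty is converting the volume hypothesis into something a partial-coloring theorem can use.
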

We can then replace $w$ with $w_X$ and iterate (with a bit of care for the errors). 

Thus, it suffices to show that if $|S| \gg (\log N)/\eps^2$, then $\vol(Q_\eps) \geq c^{|S|}$ for some universal constant $c > 0$. 

\paragraph{Bounding the volume of $Q_\eps$} This is the key part of the proof now. First, observe that we have two sets of constraints in $Q_\eps$: the \emph{upper} bound constraints and the lower bound constraints. We need to simultaneously satisfy both. Let us separate the two out:

\begin{align*}
Q^+_\eps &= \left\{X \in [-1,1]^S: \sum_{s \in S} X_s w(s) \lambda(\chi,s) \leq \eps \sum_{s \in S} w(s) \lambda(\chi,s),\;\;\forall \chi \in \widehat{G}\right\},\\
Q^-_\eps &= \left\{X \in [-1,1]^S: \sum_{s \in S} X_s w(s) \lambda(\chi,s) \geq - \eps \sum_{s \in S} w(s) \lambda(\chi,s),\;\;\forall \chi \in \widehat{G}\right\}.
\end{align*}

Clearly, $Q^-_\eps = - Q^+_\eps$ and we want to lower bound $\vol(Q_\eps) = \vol(Q^+_\eps \cap -Q^+_\eps)$. In general, getting a lower bound on the volume of $Q^+_\eps$, would not by itself imply a volume bound on the symmetrized form $Q^+_\eps \cap -Q^+_\eps$. However, \cite{ReisR26} introduced an approach where sufficiently strong volume lower bounds on all coordinate sections of a convex body $K$ implies a lower bound on the volume of $K \cap -K$: 
\begin{theorem}[Theorem 16 in \cite{ReisR26}]
\label{thm:large-symmetric-subset}
    Let $p, \eps\in(0, 1/2]$ be real numbers, and let $m\in\N$ be an integer such that $m\geq\log_2(1/p)/\eps^2$. Let $[-\eps, \eps]^m\seq K\seq[-1, 1]^m$ be a convex body such that for every non-empty $T\seq[m]$ we have $\vol_T(K_T)\geq p2^{|T|}$. Then $\vol_m(K\cap -K)\geq 2^{-5m}$. 
\end{theorem}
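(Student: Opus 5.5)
The plan is to reduce the statement to a Milman--Pajor type inequality about barycenters, and then use the section hypothesis only to control where the barycenter of a suitable sub-body sits. The reduction goes as follows. Let $L\seq K$ be a convex body with barycenter $b$. The Milman--Pajor inequality applied to $L-b$ gives
\[\vol_m\bigl(L\cap(2b-L)\bigr)\geq 2^{-m}\vol_m(L).\]
Now suppose $\|b\|_\infty\leq \eps/4$. Take $y\in L\cap(2b-L)$, so that both $y\in K$ and $2b-y\in K$. Writing
\[-\tfrac12 y=\tfrac12(2b-y)+\tfrac12(-b),\]
we have $-b\in[-\eps,\eps]^m\seq K$, so $-\tfrac12 y\in K$. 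Also $\tfrac12 y=\tfrac12 y+\tfrac12\cdot 0\in K$. Hence
\[\tfrac12\bigl(L\cap(2b-L)\bigr)\seq K\cap -K,\]
and therefore $\vol_m(K\cap-K)\geq 4^{-m}\vol_m(L)$. It then suffices to find a convex $L\seq K$ with barycenter in $[-\eps/4,\eps/4]^m$ and $\vol_m(L)\geq 2^{-3m}$. (Taking $L=K$ directly would already give $4^{-m}\cdot p2^m\geq 2^{-m}p$, which is ample, since $p\geq 2^{-\eps^2 m}\geq 2^{-m/4}$ by the hypothesis $m\geq\log_2(1/p)/\eps^2$. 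So only the barycenter needs fixing.)

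The main step, and the expected obstacle, is producing such an $L$ when the barycenter of $K$ is far from the origin. My first attempt is an iterative halfspace-cutting procedure. While some coordinate satisfies $|b_i|>\eps/4$, say $b_i>\eps/4$, replace the current body by its intersection with a halfspace in $x_i$ that pulls $b_i$ back toward $0$. We have to show that the cut keeps a constant fraction of the volume. For this we use Grünbaum's inequality: any halfspace through the barycenter retains at least a $1/e$ fraction of the volume.

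Bounding the number of cuts is where the section hypothesis has to enter. I would argue by potential: a large barycenter coordinate means the mass of $K$ is concentrated on one side in coordinate $i$. Consider the set $T$ of "bad" coordinates. The lower bound $\vol_T(K_T)\geq p2^{|T|}$, compared with the $2^{|T|}$ volume of the full cube $[-1,1]^T$, should force the average bias over $T$ to be at most about $\sqrt{\log(1/p)/|T|}$. The mechanism I expect is a Hoeffding/entropy-type bound: a body occupying a $p$ fraction of the cube cannot have all of its $|T|$ marginal means exceeding $\eps/4$ unless $|T|\eps^2\lesssim\log(1/p)$. This in turn is $\lesssim \eps^2 m$, so only a constant fraction of the coordinates can be bad, and the cuts cost at most $e^{-O(m)}$ in volume.

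Finally I would tune the constants (the threshold $\eps/4$, the number of cuts, the factor $4^{-m}$) so that the total loss is at most $2^{-5m}$. The delicate part is making the entropy/concentration comparison rigorous for sections rather than projections, and ensuring that cutting along one coordinate does not re-inflate the others. If iterating halfspace cuts proves too lossy, I would instead choose $L$ as $K$ intersected with a single box $\prod_i[a_i,1]$, with the thresholds $a_i$ chosen via a fixed-point/continuity argument so that the barycenter lands at $0$ directly.
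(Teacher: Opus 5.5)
The paper does not prove this statement; it imports it from \cite{ReisR26}, so your argument has to stand on its own. There is a genuine gap: the step that carries all the content is left open, and the mechanism you sketch for it does not work.

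Your Milman--Pajor reduction is fine up to an arithmetic slip. You wrote $\tfrac12(2b-y)+\tfrac12(-b)$, but this equals $\tfrac12(b-y)$, not $-\tfrac12 y$. The correct identity is $-\tfrac12 y=\tfrac12(2b-y)+\tfrac12(-2b)$. It needs $-2b\in K$, which holds whenever $\|b\|_\infty\le\eps/2$, so your threshold $\eps/4$ is enough. However, the reduction by itself buys very little. $L=K\cap -K$ is an admissible choice of $L$, so finding $L\seq K$ with barycenter in $[-\eps/4,\eps/4]^m$ and $\vol_m(L)\ge 2^{-3m}$ is equivalent to the theorem up to the constant in the exponent. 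The hypothesis on the sections $K_T$ for $T\neq[m]$ never actually enters your argument.

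The ``constant fraction of bad coordinates'' claim fails. Made rigorous, your entropy heuristic uses only $\vol_m(K)\ge p2^m$. The uniform measure on $K$ has relative entropy at most $\ln(1/p)$ with respect to the uniform measure on $[-1,1]^m$. Hoeffding together with Donsker--Varadhan then gives $\sum_{i\in T}|b_i|\le\sqrt{2|T|\ln(1/p)}$ for every $T$. With $\ln(1/p)\le\eps^2 m\ln 2$, this bounds the number of coordinates with $|b_i|>\eps/4$ by $32\ln 2\cdot m>m$, which is vacuous. So the claim does not follow, and it is in fact false. Take $K$ to be the product of $m/k$ blocks $\{y\in[-1,1]^k:\sum_i y_i\ge-\eps k\}$ with $k\approx 1/(3\eps^2)$, $\eps$ small and $m$ a large multiple of $k$.
\begin{itemize}
\item $K$ contains $[-\eps,\eps]^m$.
\item Let $q\approx\Phi(1)\approx 0.84$ be the relative volume of one block. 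A section of a block is relatively no smaller than the block, by Anderson's inequality. Hence $\vol_T(K_T)\ge q^{m/k}2^{|T|}$ for every $T$, and the hypotheses hold with $\log_2(1/p)\approx 0.75\,\eps^2 m$.
\item Yet by the CLT every barycenter coordinate is about $\eps\,\phi(1)/\Phi(1)\approx 0.29\,\eps>\eps/4$, where $\phi,\Phi$ are the standard normal density and distribution function.
\end{itemize}

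The cutting scheme also has no invariant to control it.
\begin{itemize}
\item After a cut, the body no longer satisfies the section hypotheses.
\item Each Gr\"unbaum cut can add $1$ to $\ln(2^m/\vol_m)$. After $\gg\eps^2 m$ cuts the entropy bound says nothing at scale $\eps$.
\item A cut in coordinate $i$ moves the barycenter in the other coordinates.
\end{itemize}
So nothing bounds the number of cuts by the roughly $2.6m$ you can afford, going from $\vol_m(K)\ge 2^{3m/4}$ down to $2^{-3m}$. Your box/fixed-point fallback has the same defect: at best it places the barycenter, but it gives no lower bound on $\vol_m(L)$.

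What is missing is an argument that turns the volume lower bounds on all coordinate sections into a volume lower bound for a sub-body with (nearly) centered barycenter, or directly for $K\cap -K$.
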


Thus, by using the above theorem, it now suffices to lower bound the volume of $Q^+_\eps$.\footnote{Note that $[-\eps, \eps]^S\seq Q^+_\eps$, and thus we meet that criterion in the hypothesis of \cref{thm:large-symmetric-subset}} We in fact show that the volume of even $Q^+_0$ is at least $2^{|S|}/N$. 

\begin{theorem}[Volume Estimate for the Asymmetric Sparsification Polytope]
\label{thm:volume-claim}
Let $S\subset G$ be an arbitrary set, and let $w:S\to[0, \infty)$ be an arbitrary map, and let $Q^+_0$ be as defined above. Then, $\vol_S(Q^+_0)\geq (1/N) \cdot 2^{|S|}$. 
\end{theorem}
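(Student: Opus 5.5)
The plan is a probabilistic averaging argument that exploits the symmetry of the characters. Since $Q^+_0\seq[-1,1]^S$, it suffices to produce a random vector $X$ that is uniformly distributed on $[-1,1]^S$ and satisfies $\Pr[X\in Q^+_0]\geq 1/N$; then $\vol_S(Q^+_0)=2^{|S|}\Pr[X\in Q^+_0]\geq 2^{|S|}/N$. I will build not one such vector but a whole $\hat{G}$-invariant random process $(X^\chi)_{\chi\in\hat G}$ of uniform vectors, and show that at least one member of the process always lies in $Q^+_0$. Invariance then gives each member probability at least $1/N$.

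\textbf{Construction.} For each $s\in S$ draw independently $Z_s=A_s+iB_s\in\C$, where $(A_s,B_s)\sim\nu$ is the rotationally invariant distribution of \cref{fact:rot-uniform}. For $\chi\in\hat G$ set $X^\chi_s:=\Ree(\chi(s)Z_s)$.

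Since $|\chi(s)|=1$, multiplying by $\chi(s)$ is a rotation. So $\chi(s)Z_s\sim\nu$, and its real part is uniform on $[-1,1]$. By independence across $s$, each $X^\chi$ is uniform on $[-1,1]^S$.

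The process is also $\hat G$-invariant. Shifting the index by $\psi$ gives $X^{\chi+\psi}_s=\Ree(\chi(s)\,\psi(s)Z_s)$, and $(\psi(s)Z_s)_s$ has the same joint law as $(Z_s)_s$, again by rotation invariance and independence.

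\textbf{Key identity.} Define the random function $F(\chi):=\sum_{s}w(s)X^\chi_s=\sum_s w(s)\Ree(\chi(s)Z_s)$ on $\hat G$. Using $\Ree\psi(s)\cdot\Ree(u)=\tfrac12\bigl(\Ree(\psi(s)u)+\Ree(\overline{\psi(s)}u)\bigr)$ and $\overline{\psi(s)}=(-\psi)(s)$, we get
\[
\sum_s w(s)\lambda(\psi,s)X^\chi_s \;=\; F(\chi)-\tfrac12\bigl(F(\chi+\psi)+F(\chi-\psi)\bigr).
\]
Hence $X^\chi\in Q^+_0$ if and only if $F(\chi)\le \tfrac12(F(\chi+\psi)+F(\chi-\psi))$ for every $\psi\in\hat G$. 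In words, $F$ must be ``discretely subharmonic-minimal'' at $\chi$.

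\textbf{Conclusion.} Let $\chi^*$ be a minimizer of $F$ over the finite set $\hat G$. Then $F(\chi^*)\le F(\chi^*\pm\psi)$ for all $\psi$, so $X^{\chi^*}\in Q^+_0$ always. Therefore $\sum_{\chi}\Pr[X^\chi\in Q^+_0]\ge 1$. By $\hat G$-invariance all $N$ summands are equal, so $\Pr[X^{\hat 0}\in Q^+_0]\ge 1/N$. Since $X^{\hat 0}$ is uniform on $[-1,1]^S$, this gives the volume bound.

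\textbf{Main obstacle.} The step I expect to need the most care is the key identity, in particular checking that the signs and the factor $\tfrac12$ work out exactly. This is where \cref{fact:rot-uniform} is essential: it is what lets the rotated copies $X^{\chi}$ be both uniform on the cube and jointly invariant, so that the linear constraints become the minimum condition. Symmetry of $S$ is not needed anywhere, consistent with the statement allowing arbitrary $S$.
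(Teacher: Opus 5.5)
Your proof is correct and is essentially the paper's argument: you use the same rotationally invariant coupling $Z_s$ and the same random function $F$ on $\hat G$ (the paper's $\tilde F$), and your identity $\sum_s w(s)\lambda(\psi,s)X^\chi_s = F(\chi)-\tfrac12\bigl(F(\chi+\psi)+F(\chi-\psi)\bigr)$ is just the average of the paper's two comparisons $\tilde F(\hat 0)-\tilde F(\pm\chi)$. The only cosmetic difference is that you sum over the events $\{X^\chi\in Q^+_0\}$ rather than $\{\chi \text{ minimizes } F\}$, so in your version the marginal uniformity of each $X^\chi$ already makes the $N$ probabilities equal, and the joint $\hat G$-invariance is not even needed.
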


The proof of this is by an elementary symmetry argument. For $X$ uniformly random over $[-1,1]^S$, by the symmetry of characters it follows that $$\Pr\left[\inf_{\chi\in\hat{G}} \sum_{s \in S} X_s w(s) \lambda(\chi,s) = \sum_{s \in S} X_s w(s) \lambda(\widehat{0}, s)\right] \geq \frac{1}{N}.$$

We then estimate the probability that $\sum_{s \in S} X_s w(s) \lambda(\widehat{0}, s) \leq 0$. 

Combining the above bounds with suitably choosing an increasing schedule of $\eps_t$'s gives us \cref{thm:cayley-sparsifier}.

\section{Proof of \cref{thm:cayley-sparsifier}}

Fix a $\Cay(G,S,w)$ as in the statement of \cref{thm:cayley-sparsifier}. We start by proving the volume lower bound on $Q^+_0$. To do so, we need some elementary properties of group-invariant random processes.

\begin{proposition}
\label{prop:invariant-process}
    Let $H$ be a finite abelian group, and let $(W_h)_{h\in H}$ be a $H$-invariant real-valued random process. Then 
    \[\Pr\left(W_0 = \inf_{h\in H}W_h\right) \geq \frac{1}{|H|}. \]
\end{proposition}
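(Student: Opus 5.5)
The plan is a standard averaging (union-bound) argument that exploits the invariance. For each $h\in H$, let $E_h$ be the event that $W_h = \inf_{h'\in H} W_{h'}$. Since $H$ is finite, the infimum is a minimum over finitely many real numbers and is therefore attained. Hence for every outcome at least one $E_h$ occurs, so $\bigcup_{h\in H} E_h$ is the whole sample space. By the union bound,
\[1 \le \sum_{h\in H}\Pr(E_h).\]

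The second step is to show that all the probabilities $\Pr(E_h)$ are equal. Fix $h_0\in H$ and consider the shifted process $W'_h := W_{h+h_0}$. By $H$-invariance, $(W'_h)_{h\in H}$ has the same joint distribution as $(W_h)_{h\in H}$. The event $E_{h_0}$ can be written as $\{W'_0 = \min_{h\in H} W'_h\}$. Here we use that $h\mapsto h+h_0$ is a bijection of $H$, so $\min_h W'_h=\min_h W_h$. This event is a measurable function of the vector $(W'_h)_h$: it is the preimage of the closed set $\{x\in\R^H : x_0=\min_h x_h\}$. Because the joint laws agree, it has the same probability as $\{W_0=\min_h W_h\} = E_0$. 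Therefore $\Pr(E_{h_0})=\Pr(E_0)$ for every $h_0$.

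Combining the two steps gives $1\le |H|\Pr(E_0)$, which is the claim.

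The only point needing care is the second step. The invariance must be applied to the full joint distribution, not just to the marginals, because the event involves all coordinates at once. This is exactly what the definition of an $H$-invariant random process provides, so I expect no real obstacle. Ties are harmless: they only make the union bound loose, and the inequality goes in the needed direction.
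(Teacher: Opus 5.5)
Your proof is correct and is essentially the paper's argument. The paper sums the indicators $\1\bigl(W_h = \inf_{h'\in H} W_{h'}\bigr)$, which are pointwise at least $1$ in total, takes expectations, and then uses $H$-invariance to make all the terms equal; this is exactly your union bound followed by the shift argument. Your remarks that invariance is needed for the full joint law and that the event is measurable are accurate details the paper leaves implicit.
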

\begin{proof}
    Define the Boolean-valued random variable
    \[m_h:= \1\left(W_h = \inf_{h'\in H}W_{h'}\right).\]
    By definition we have $\sum_{h\in H}m_h\geq 1$. On taking expectations we obtain
    \[\sum_{h\in H}\Pr\left(W_h = \inf_{h'\in H}W_{h'}\right)\geq 1.\]
    By $H$-invariance we have, for all $h\in H$,
    \[\Pr\left(W_h = \inf_{h'\in H}W_{h'}\right) = \Pr\left(W_0 = \inf_{h'\in H}W_{h'}\right),\] and the result follows.
\end{proof}

\begin{lemma}
\label{lem:zero-dom}
    Let $\{\xi_s\}_{s\in S}$ be i.i.d. random variables uniformly sampled from $[-1, 1]$, and let $w:S\to[0, \infty)$ be some fixed map. For every $\chi\in\hat{G}$ define the real-valued random variable
    \[F(\chi):= \sum_{s\in S}\xi_sw(s)\Ree\chi(s).\]
    Then 
    \[\Pr\left(F(\hat{0}) =  \inf_{\chi\in\hat{G}}F(\chi)\right)\geq\frac{1}{N}. \]
\end{lemma}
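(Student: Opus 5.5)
The plan is to realize the $\xi_s$ as real parts of rotationally invariant planar random variables. This gives an auxiliary process that is genuinely $\hat{G}$-invariant, so \cref{prop:invariant-process} applies to it. We then show that the event for the auxiliary process forces the event for $F$.

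\textbf{Construction.} By \cref{fact:rot-uniform}, take i.i.d.\ pairs $(X_s,Y_s)_{s\in S}$, each distributed as $\nu$. Write $Z_s:=X_s+iY_s\in\C$. Then the $X_s$ are i.i.d.\ uniform on $[-1,1]$, so we may and will set $\xi_s:=X_s$; this is the original process $F$ from the statement, not a modified one. Define the auxiliary process
\[\Phi(\chi):=\sum_{s\in S}w(s)\Ree\bigl(Z_s\chi(s)\bigr)=\sum_{s\in S}w(s)\bigl(X_s\Ree\chi(s)-Y_s\Imm\chi(s)\bigr),\qquad \chi\in\hat{G}.\]
Note that $\Phi(\hat{0})=\sum_s w(s)X_s=F(\hat{0})$.

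\textbf{Step 1: $\Phi$ is $\hat{G}$-invariant.} Fix $\chi_0\in\hat{G}$. Then
\[\Phi(\chi+\chi_0)=\sum_s w(s)\Ree\bigl((Z_s\chi_0(s))\chi(s)\bigr).\]
Each $\chi_0(s)$ is a unit complex number, so multiplying by it is a deterministic rotation of $\R^2$. Since the $Z_s$ are independent and rotationally invariant, $(Z_s\chi_0(s))_{s}$ has the same joint law as $(Z_s)_s$. Hence $(\Phi(\chi+\chi_0))_\chi$ has the same joint law as $(\Phi(\chi))_\chi$. Applying \cref{prop:invariant-process} with $H=\hat{G}$ and $|H|=N$ gives
\[\Pr\Bigl(\Phi(\hat{0})=\inf_\chi\Phi(\chi)\Bigr)\geq\frac1N.\]

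\textbf{Step 2: the $\Phi$-event implies the $F$-event.} This is the key step. For each $\chi$ put
\[a_\chi:=\sum_s w(s)X_s\bigl(1-\Ree\chi(s)\bigr),\qquad b_\chi:=\sum_s w(s)Y_s\Imm\chi(s).\]
Then $\Phi(\hat{0})\le\Phi(\chi)$ reads $a_\chi\le -b_\chi$. Now use the conjugate character $-\chi$, which satisfies $\Ree(-\chi)(s)=\Ree\chi(s)$ and $\Imm(-\chi)(s)=-\Imm\chi(s)$. This gives $a_{-\chi}=a_\chi$ and $b_{-\chi}=-b_\chi$, so $\Phi(\hat{0})\le\Phi(-\chi)$ reads $a_\chi\le b_\chi$. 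Adding the two inequalities yields $a_\chi\le 0$. Since $F(\hat{0})-F(\chi)=a_\chi$, this says exactly $F(\hat{0})\le F(\chi)$. As this holds for every $\chi$, the event $\{\Phi(\hat{0})=\inf\Phi\}$ is contained in $\{F(\hat{0})=\inf F\}$. Combining with Step 1 gives the lemma.

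\textbf{Main obstacle.} The process $F$ itself is not $\hat{G}$-invariant: shifting $\chi$ mixes in imaginary parts of characters, and no choice of real $\xi_s$ alone can absorb that. So the delicate point is inventing a coupling in which invariance holds. The planar lift via $\nu$ restores invariance, and the pairing of $\chi$ with $-\chi$ then cancels the extraneous $Y$-terms. This pairing needs no symmetry assumption on $S$ or $w$, which is why the argument works for the statement exactly as written.
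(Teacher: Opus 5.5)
Your proof is correct and follows essentially the same route as the paper: lift the $\xi_s$ to rotationally invariant complex variables via \cref{fact:rot-uniform}, apply \cref{prop:invariant-process} to the $\hat{G}$-invariant process $\sum_s w(s)\Ree(Z_s\chi(s))$, and pair $\chi$ with $-\chi$ to cancel the imaginary-part terms. The only cosmetic difference is that you couple $\xi_s := X_s$ directly, whereas the paper uses at the end that $(X_s)_{s\in S}$ and $(\xi_s)_{s\in S}$ have the same distribution.
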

\begin{proof}
    Let $\nu = (X, Y)$ be a rotationally invariant distribution supported on $\R^2$ with uniform marginals as in \cref{fact:rot-uniform}. Let $\{V_s\}_{s\in S}$ be i.i.d. random variables sampled from $\nu$, write $V_s = (X_s, Y_s)\in\R^2$, and also write $Z_s:= X_s + iY_s\in\C$. Note that $\{Z_s\}_{s\in S}$ are independent rotationally invariant complex-valued random variables.
    
    Now define the real-valued random variables
    \[\tilde{F}(\chi):= \sum_{s\in S}w(s)\cdot\Ree(Z_s\chi(s)) = \sum_{s\in S}w(s)\cdot\left(X_s\cdot\Ree\chi(s) - Y_s\cdot\Imm\chi(s)\right).\]
    We claim that $\tilde{F}$ is $\hat{G}$-invariant: Indeed, for any $\chi_0\in\hat{G}$ we have 
    \[\tilde{F}(\chi + \chi_0) = \sum_{s\in S}w(s)\cdot\Ree(Z_s\chi_0(s)\chi(s)) = \sum_{s\in S}\Ree(w(s)Z_s\chi_0(s)\chi(s)).\]
    Since $(w(s)Z_s)_{s\in S}$ are independent rotationally invariant random variables, the (joint) distribution of $(w(s)Z_s)_{s\in S}$ is the same as the distribution of $(w(s)Z_s\chi_0(s))_{s\in S}$ since $\chi_0(s)$ corresponds to a rotation. The $\hat{G}$-invariance of $\tilde{F}$ follows, and consequently, by \cref{prop:invariant-process} we have 
    \[\Pr\left(\tilde{F}(\hat{0}) =  \inf_{\chi\in\hat{G}}\tilde{F}(\chi)\right)\geq\frac{1}{N}.\]
    Now, note that for any $\chi\in\hat{G}$, we have 
    \[\tilde{F}(\hat{0}) - \tilde{F}(\pm\chi) = \sum_{s\in S}w(s)X_s(1 - \Ree\chi(s)) \pm \sum_{s\in S}w(s)Y_s\Imm\chi(s),\]
    where the $\pm$s correspond.\footnote{Note that $-\chi$ here represents the inverse of $\chi$ in $\hat{G}$, \textbf{not} the literal negation of $\chi$} Thus if $\tilde{F}(\hat{0}) =  \inf_{\chi\in\hat{G}}\tilde{F}(\chi)$ then for every $\chi\in\hat{G}$ we have
    \[\sum_{s\in S}w(s)X_s(1 - \Ree\chi(s))\leq -\left| \sum_{s\in S}w(s)Y_s\Imm\chi(s)\right|\leq 0\implies\sum_{s\in S}w(s)X_s\leq \sum_{s\in S}w(s)X_s\Ree\chi(s).\]
    Since this holds for every $\chi\in\hat{G}$, we obtain
    \[\tilde{F}(\hat{0}) =  \inf_{\chi\in\hat{G}}\tilde{F}(\chi)\implies\sum_{s\in S}w(s)X_s\leq \inf_{\chi\in\hat{G}}\sum_{s\in S}w(s)X_s\Ree\chi(s).\]
    Consequently, we have 
    \[\frac{1}{N}\leq \Pr\left(\tilde{F}(\hat{0}) =  \inf_{\chi\in\hat{G}}\tilde{F}(\chi)\right)\leq \Pr\left(\sum_{s\in S}w(s)X_s\leq \inf_{\chi\in\hat{G}}\sum_{s\in S}w(s)X_s\Ree\chi(s)\right)\]
    \[ = \Pr\left(\sum_{s\in S}w(s)\xi_s\leq \inf_{\chi\in\hat{G}}\sum_{s\in S}w(s)\xi_s\Ree\chi(s)\right) = \Pr\left(F(\hat{0})\leq \inf_{\chi\in\hat{G}}F(\chi)\right).\]
    since the distribution of $(X_s)_{s\in S}$ is the same as the distribution of $(\xi_s)_{s\in S}$, as desired.
\end{proof}

We can now prove our volume estimate for the asymmetric sparsification polytope $Q^+_0$, \cref{thm:volume-claim}.

\begin{proof}[Proof of \cref{thm:volume-claim}]
Note that 
\[Q^+_0 = \left\{X\in[-1, 1]^S:\sum_{s\in S}w(s)X_s\leq \inf_{\chi\in\hat{G}}\sum_{s\in S}w(s)X_s\Ree\chi(s)\right\}\]
by the definition of $\lambda(\chi, s)$. Also note that 
\[\frac{\vol_S(Q^+_0)}{2^{|S|}} = \Pr_{X\sim[-1, 1]^S}\left(\sum_{s\in S}w(s)X_s\leq \inf_{\chi\in\hat{G}}\sum_{s\in S}w(s)X_s\Ree\chi(s)\right).\]
Thus for the random variable $X$ sampled uniformly from $[-1, 1]^S$ define for any $\chi\in\hat{G}$
\[F(\chi):= \sum_{s\in S}w(s)X_s\Ree\chi(s),\]
and note that 
\[\Pr_{X\sim[-1, 1]^S}\left(\sum_{s\in S}w(s)X_s\leq \inf_{\chi\in\hat{G}}\sum_{s\in S}w(s)X_s\Ree\chi(s)\right) = \Pr_{X\sim[-1, 1]^S}\left(F(\hat{0})\leq\inf_{\chi\in\hat{G}}F(\chi)\right),\]
and we are now done by \cref{lem:zero-dom}.   
\end{proof}

We now prove our main theorem---\cref{thm:cayley-sparsifier}.

\begin{proof}[Proof of \cref{thm:cayley-sparsifier}]

As described in the introduction, the basic idea is to iteratively prune the support of $S$. We start with the weight function $w^{(0)} = w\in\R_{\geq 0}^S$.\footnote{Here we view functions $S\to\R$ as vectors in $\R^S$}

We iteratively construct a series of vectors $\{w^{(t)}\}_{0\leq t\leq T}\in\R_{\geq 0}^S$ such that $S_t\subset S_{t - 1}$ and $m_t\leq 0.75m_{t - 1}$ for all $t\geq 1$, where $S_t:= \supp(w^{(t)})$ and $m_t:= |S_t|$. The stopping time $T$ is chosen to be the smallest integer for which $m_T\leq C_0\eps^{-2}\log(N)$ for some large enough absolute constant $C_0 > 0$. Note that $T\leq O(\log N)$. Also write $\eps_t:= \sqrt{m_t^{-1}\cdot\log_2(N)}$.

Now for any $0\leq t < T$ consider the sparsification polytope
\begin{equation}\label{eq:qepst-def}
\begin{aligned}
Q_{\eps_t}:=\Bigl\{X\in[-1,1]^{S_t}:\;&
(1-\eps_t)\sum_{s\in S_t}w^{(t)}(s)\lambda(\chi,s)
\le
\sum_{s\in S_t}(1+X_s)w^{(t)}(s)\lambda(\chi,s)\\
&
\le
(1+\eps_t)\sum_{s\in S_t}w^{(t)}(s)\lambda(\chi,s)
\quad\forall\,\chi\in\hat{G}
\Bigr\}.
\end{aligned}
\end{equation}
Note that since $Q_{\eps_t} = Q_{\eps_t}^+ \cap -Q_{\eps_t}^+$, $Q_{\eps_t}$ is a centrally symmetric convex body. Furthermore, note that $\vol_T((Q_{\eps_t}^+)_T)\geq \vol_T((Q_{0}^+)_T)\overset{\text{\cref{thm:volume-claim}}}{\geq} (1/N)\cdot 2^{|T|}$ for any non-empty $T\seq S_t$.\footnote{Note that $(Q_{0}^+)_T$ is exactly the convex body you get when you replace $S_t$ in \cref{eq:qepst-def} with $T$, and thus \cref{thm:volume-claim} applies} Consequently, by \cref{thm:large-symmetric-subset} we have $\vol(Q_{\eps_t})\geq 2^{-5|S_t|}$. Hence by \cref{thm:good-partial-coloring} we obtain that there exists $x\in \alpha Q_{\eps_t}\cap [-1, 1]^{S_t}$ such that $\#\{s\in S:x_s = -1\}\geq m/4$ for some absolute constant $\alpha > 0$. Since a separation oracle for $Q_{\eps_t}$ can be implemented in (deterministic) $\poly(m, N) \leq \poly(N)$ time, we can compute $x$ in randomized $\poly(m)\cdot\poly(N) \leq \poly(N)$ time.

Write this $x$ as $x^{(t)}$ and update $w^{(t + 1)}:= w^{(t)}\odot(\1 + x^{(t)})$ (filling $w^{(t + 1)}$ with zeros in $S\setminus S_t$). Notice that all the desired invariants (such as $S_{t + 1}\subset S_t$ and $m_{t + 1}\leq 0.75m_t$) are maintained by this update. Moreover, by the definition of $Q_{\eps_t}$ we have 
\[(1-\alpha\eps_t)\sum_{s\in S_t}w^{(t)}(s)\lambda(\chi,s)
\le
\sum_{s\in S_t}(1+X_s)w^{(t)}(s)\lambda(\chi,s)\le
(1+\alpha\eps_t)\sum_{s\in S_t}w^{(t)}(s)\lambda(\chi,s)
\quad\forall\,\chi\in\hat{G}\]
Consequently when the process ends, we have 
\[\prod_{t = 0}^{T - 1}(1-\alpha\eps_t)\sum_{s\in S_t}w(s)\lambda(\chi,s)
\le
\sum_{s\in S_t}w^{(T)}(s)\lambda(\chi,s)\le
\prod_{t = 0}^{T - 1}(1+\alpha\eps_t)\sum_{s\in S_t}w(s)\lambda(\chi,s)
\quad\forall\,\chi\in\hat{G}\]
Note that since $\eps_t\leq C_0^{-1/2}\eps\cdot(3/4)^{(T - t - 1)/2}$, we have
\[\prod_{t = 0}^{T - 1}(1 - \alpha\eps_t)\geq 1 - \sum_{t = 0}^{T - 1}\alpha\eps_t\geq 1 - \alpha\eps C_0^{-1/2}\sum_{t = 0}^{T - 1}\left(\frac{3}{4}\right)^{(T - t - 1)/2}\geq 1 - O(\alpha\eps C_0^{-1/2}).\]
Similarly, 
\[\prod_{t = 0}^{T - 1}(1 + \alpha\eps_t)\leq \exp\left(\sum_{t = 0}^{T - 1}\alpha\eps_t\right)\leq\exp\left(O(\alpha\eps C_0^{-1/2})\right).\]
Consequently, if $\alpha C_0^{-1/2}\ll 1/2$, i.e. $C_0\gg \alpha^2$, then we have 
\[1 - \eps\leq\prod_{t = 0}^{T - 1}(1 - \alpha\eps_t)\leq\prod_{t = 0}^{T - 1}(1 + \alpha\eps_t)\leq 1 + \eps,\]
as desired. Here we use the fact that $e^x\leq 1 + 2x$ for all $x\in[0, 1]$.

Finally, to meet symmetry requirements, notice that for any weight function $w:G\to[0, \infty)$ (where we set $w(g) = 0$ if $g\notin\supp(w)$), if we define $\tilde{w}:G\to[0, \infty)$ as $\tilde{w}(g):= \frac{w(g) + w(-g)}{2}$, then $\tilde{w}$ is symmetric, $|\supp(\tilde{w})|\leq 2|\supp(w)|$, and for any $\chi\in\hat{G}$ we have $\sum_{s\in S}w(s)\lambda(\chi, s) = \sum_{s\in S}\tilde{w}(s)\lambda(\chi, s)$. Consequently, at the cost of blowing up the support size of $w^{(T)}$ by a factor of $2$, we obtain a symmetric function (which can be computed in randomized $\poly(N)$ time) $w':S\to[0, \infty)$ \footnote{notice that since $S$ is symmetric the ``symmetrization'' process maintains that $\supp(w')\seq S$}. By \cref{fact:cayley-eig} $w'$ meets the stated requirements of the theorem, as desired.
\end{proof}
\section{Acknowledgments}
\parhead{AI Acknowledgment} After \cite{ReisR26} was posted online, the authors realized that ideas from the paper could be useful in sparsifying abelian Cayley graphs. The authors isolated \cref{thm:volume-claim} as a suitable analog of \cite[Theorem~15]{ReisR26} that would imply \cref{thm:cayley-sparsifier} as a consequence. The simple proof of \cref{thm:volume-claim} was found by multiple sequential sessions of ChatGPT-5.5-Plus. The authors wrote the proof of this claim with suitable modifications for clarity. 

A.B. thanks Louie Putterman and Josh Brakensiek for useful discussions.

\bibliographystyle{alphaurl}
\bibliography{abelian}

@misc{ReisR26,
      title={Linear-size $\ell_1$ sparsifiers}, 
      author={Victor Reis and Thomas Rothvoss},
      year={2026},
      eprint={2606.28147},
      archivePrefix={arXiv},
      primaryClass={math.MG},
      url={https://arxiv.org/abs/2606.28147}, 
}

@inproceedings{BenczurK96,
author = {Bencz\'{u}r, Andr\'{a}s A. and Karger, David R.},
title = {Approximating s-t minimum cuts in \~{O}(n2) time},
year = {1996},
isbn = {0897917855},
publisher = {Association for Computing Machinery},
address = {New York, NY, USA},
url = {https://doi.org/10.1145/237814.237827},
doi = {10.1145/237814.237827},
booktitle = {Proceedings of the Twenty-Eighth Annual ACM Symposium on Theory of Computing},
pages = {47–55},
numpages = {9},
location = {Philadelphia, Pennsylvania, USA},
series = {STOC '96}
}

@article{SpielmanT11,
author = {Spielman, Daniel A. and Teng, Shang-Hua},
title = {Spectral Sparsification of Graphs},
journal = {SIAM Journal on Computing},
volume = {40},
number = {4},
pages = {981-1025},
year = {2011},
doi = {10.1137/08074489X},
URL = {https://doi.org/10.1137/08074489X},
}

@article{SpielmanS11,
author = {Spielman, Daniel A. and Srivastava, Nikhil},
title = {Graph Sparsification by Effective Resistances},
journal = {SIAM Journal on Computing},
volume = {40},
number = {6},
pages = {1913-1926},
year = {2011},
doi = {10.1137/080734029},
}

@article{BatsonSS14,
author = {Batson, Joshua and Spielman, Daniel A. and Srivastava, Nikhil},
title = {Twice-Ramanujan Sparsifiers},
journal = {SIAM Review},
volume = {56},
number = {2},
pages = {315-334},
year = {2014},
doi = {10.1137/130949117},
URL = {https://doi.org/10.1137/130949117},
}

@inproceedings{KhannaPS24,
author={Khanna, Sanjeev and Putterman, Aaron and Sudan, Madhu},
  editor       = {David P. Woodruff},
  title        = {Code sparsification and its applications},
  booktitle    = {Proceedings of the 2024 {ACM-SIAM} Symposium on Discrete Algorithms,
                  {SODA} 2024, Alexandria, VA, USA, January 7-10, 2024},
  pages        = {5145--5168},
  publisher    = {{SIAM}},
  year         = {2024},
  url          = {https://doi.org/10.1137/1.9781611977912.185},
  doi          = {10.1137/1.9781611977912.185},
  bibsource    = {dblp computer science bibliography, https://dblp.org}
}

@inproceedings{KhannaPS25,
author = {Khanna, Sanjeev and Putterman, Aaron and Sudan, Madhu},
title = {Efficient Algorithms and New Characterizations for CSP Sparsification},
year = {2025},
isbn = {9798400715105},
publisher = {Association for Computing Machinery},
address = {New York, NY, USA},
url = {https://doi.org/10.1145/3717823.3718205},
doi = {10.1145/3717823.3718205},
booktitle = {Proceedings of the 57th Annual ACM Symposium on Theory of Computing},
pages = {407–416},
numpages = {10},
location = {Prague, Czechia},
series = {STOC '25}
}

@inproceedings{BrakensiekG25,
author = {Brakensiek, Joshua and Guruswami, Venkatesan},
title = {Redundancy Is All You Need},
year = {2025},
isbn = {9798400715105},
publisher = {Association for Computing Machinery},
address = {New York, NY, USA},
url = {https://doi.org/10.1145/3717823.3718212},
doi = {10.1145/3717823.3718212},
booktitle = {Proceedings of the 57th Annual ACM Symposium on Theory of Computing},
pages = {1614–1625},
numpages = {12},
location = {Prague, Czechia},
series = {STOC '25}
}

@inproceedings{HsiehLMPZ26,
  author       = {Jun{-}Ting Hsieh and
                  Daniel Z. Lee and
                  Sidhanth Mohanty and
                  Aaron Putterman and
                  Rachel Yun Zhang},
  editor       = {Kasper Green Larsen and
                  Barna Saha},
  title        = {Sparsifying Cayley Graphs on Every Group},
  booktitle    = {Proceedings of the 2026 Annual {ACM-SIAM} Symposium on Discrete Algorithms,
                  {SODA} 2026, Vancouver, BC, Canada, January 11-14, 2026},
  pages        = {6029--6041},
  publisher    = {{SIAM}},
  year         = {2026},
  url          = {https://doi.org/10.1137/1.9781611978971.215},
  doi          = {10.1137/1.9781611978971.215},
  bibsource    = {dblp computer science bibliography, https://dblp.org}
}

@inproceedings{Lee23,
  author       = {James R. Lee},
  editor       = {Barna Saha and
                  Rocco A. Servedio},
  title        = {Spectral Hypergraph Sparsification via Chaining},
  booktitle    = {Proceedings of the 55th Annual {ACM} Symposium on Theory of Computing,
                  {STOC} 2023, Orlando, FL, USA, June 20-23, 2023},
  pages        = {207--218},
  publisher    = {{ACM}},
  year         = {2023},
  url          = {https://doi.org/10.1145/3564246.3585165},
  doi          = {10.1145/3564246.3585165},
  bibsource    = {dblp computer science bibliography, https://dblp.org}
}

@inproceedings{JambulapatiLS23,
  author       = {Arun Jambulapati and
                  Yang P. Liu and
                  Aaron Sidford},
  editor       = {Barna Saha and
                  Rocco A. Servedio},
  title        = {Chaining, Group Leverage Score Overestimates, and Fast Spectral Hypergraph
                  Sparsification},
  booktitle    = {Proceedings of the 55th Annual {ACM} Symposium on Theory of Computing,
                  {STOC} 2023, Orlando, FL, USA, June 20-23, 2023},
  pages        = {196--206},
  publisher    = {{ACM}},
  year         = {2023},
  url          = {https://doi.org/10.1145/3564246.3585136},
  doi          = {10.1145/3564246.3585136},
  bibsource    = {dblp computer science bibliography, https://dblp.org}
}

@inbook{BasuKLM26,
author = {Arpon Basu and Pravesh K. Kothari and Yang P. Liu and Raghu Meka},
title = {Sparsifying Sums of Positive Semidefinite Matrices},
booktitle = {Proceedings of the 2026 Annual ACM-SIAM Symposium on Discrete Algorithms (SODA)},
pages = {6042-6064},
doi = {10.1137/1.9781611978971.216},
URL = {https://doi.org/10.1137/1.9781611978971.216},
year = {2026}
}

@article{FiltserK17,
  author       = {Arnold Filtser and
                  Robert Krauthgamer},
  title        = {Sparsification of Two-Variable Valued Constraint Satisfaction Problems},
  journal      = {{SIAM} J. Discret. Math.},
  volume       = {31},
  number       = {2},
  pages        = {1263--1276},
  year         = {2017},
  url          = {https://doi.org/10.1137/15M1046186},
  doi          = {10.1137/15M1046186},
  bibsource    = {dblp computer science bibliography, https://dblp.org}
}

@inproceedings{AbrahamDKKP16,
  author       = {Ittai Abraham and
                  David Durfee and
                  Ioannis Koutis and
                  Sebastian Krinninger and
                  Richard Peng},
  editor       = {Irit Dinur},
  title        = {On Fully Dynamic Graph Sparsifiers},
  booktitle    = {{IEEE} 57th Annual Symposium on Foundations of Computer Science, {FOCS}
                  2016, Hyatt Regency, New Brunswick, New Jersey, USA, October 9-11,
                  2016},
  pages        = {335--344},
  publisher    = {{IEEE} Computer Society},
  year         = {2016},
  url          = {https://doi.org/10.1109/FOCS.2016.44},
  doi          = {10.1109/FOCS.2016.44},
  bibsource    = {dblp computer science bibliography, https://dblp.org}
}

@inproceedings{KoganK15,
  author       = {Dmitry Kogan and
                  Robert Krauthgamer},
  editor       = {Tim Roughgarden},
  title        = {Sketching Cuts in Graphs and Hypergraphs},
  booktitle    = {Proceedings of the 2015 Conference on Innovations in Theoretical Computer
                  Science, {ITCS} 2015, Rehovot, Israel, January 11-13, 2015},
  pages        = {367--376},
  publisher    = {{ACM}},
  year         = {2015},
  url          = {https://doi.org/10.1145/2688073.2688093},
  doi          = {10.1145/2688073.2688093},
  bibsource    = {dblp computer science bibliography, https://dblp.org}
}

@inproceedings{KapralovKTY21,
  author       = {Michael Kapralov and
                  Robert Krauthgamer and
                  Jakab Tardos and
                  Yuichi Yoshida},
  title        = {Spectral Hypergraph Sparsifiers of Nearly Linear Size},
  booktitle    = {62nd {IEEE} Annual Symposium on Foundations of Computer Science, {FOCS}
                  2021, Denver, CO, USA, February 7-10, 2022},
  pages        = {1159--1170},
  publisher    = {{IEEE}},
  year         = {2021},
  url          = {https://doi.org/10.1109/FOCS52979.2021.00114},
  doi          = {10.1109/FOCS52979.2021.00114},
  bibsource    = {dblp computer science bibliography, https://dblp.org}
}

@inproceedings{KapralovKTY21a,
  author       = {Michael Kapralov and
                  Robert Krauthgamer and
                  Jakab Tardos and
                  Yuichi Yoshida},
  editor       = {Samir Khuller and
                  Virginia Vassilevska Williams},
  title        = {Towards tight bounds for spectral sparsification of hypergraphs},
  booktitle    = {{STOC} '21: 53rd Annual {ACM} {SIGACT} Symposium on Theory of Computing,
                  Virtual Event, Italy, June 21-25, 2021},
  pages        = {598--611},
  publisher    = {{ACM}},
  year         = {2021},
  url          = {https://doi.org/10.1145/3406325.3451061},
  doi          = {10.1145/3406325.3451061},
  bibsource    = {dblp computer science bibliography, https://dblp.org}
}

@inproceedings{ChenKN20,
  author       = {Yu Chen and
                  Sanjeev Khanna and
                  Ansh Nagda},
  editor       = {Sandy Irani},
  title        = {Near-linear Size Hypergraph Cut Sparsifiers},
  booktitle    = {61st {IEEE} Annual Symposium on Foundations of Computer Science, {FOCS}
                  2020, Durham, NC, USA, November 16-19, 2020},
  pages        = {61--72},
  publisher    = {{IEEE}},
  year         = {2020},
  url          = {https://doi.org/10.1109/FOCS46700.2020.00015},
  doi          = {10.1109/FOCS46700.2020.00015},
  bibsource    = {dblp computer science bibliography, https://dblp.org}
}

@inproceedings{AhnGM12b,
  author       = {Kook Jin Ahn and
                  Sudipto Guha and
                  Andrew McGregor},
  editor       = {Michael Benedikt and
                  Markus Kr{\"{o}}tzsch and
                  Maurizio Lenzerini},
  title        = {Graph sketches: sparsification, spanners, and subgraphs},
  booktitle    = {Proceedings of the 31st {ACM} {SIGMOD-SIGACT-SIGART} Symposium on
                  Principles of Database Systems, {PODS} 2012, Scottsdale, AZ, USA,
                  May 20-24, 2012},
  pages        = {5--14},
  publisher    = {{ACM}},
  year         = {2012},
  url          = {https://doi.org/10.1145/2213556.2213560},
  doi          = {10.1145/2213556.2213560},
  bibsource    = {dblp computer science bibliography, https://dblp.org}
}

@article{ButtiZ20,
  title = {Sparsification of {{Binary CSPs}}},
  author = {Butti, Silvia and {\v Z}ivn{\'y}, Stanislav},
  year = {2020},
  month = jan,
  journal = {SIAM Journal on Discrete Mathematics},
  volume = {34},
  number = {1},
  pages = {825--842},
  publisher = {{Society for Industrial and Applied Mathematics}},
  issn = {0895-4801},
  doi = {10.1137/19M1242446}
}

@inproceedings{SpielmanT04,
author = {Spielman, Daniel A. and Teng, Shang-Hua},
title = {Nearly-linear time algorithms for graph partitioning, graph sparsification, and solving linear systems},
year = {2004},
isbn = {1581138520},
publisher = {Association for Computing Machinery},
address = {New York, NY, USA},
url = {https://doi.org/10.1145/1007352.1007372},
doi = {10.1145/1007352.1007372},
booktitle = {Proceedings of the Thirty-Sixth Annual ACM Symposium on Theory of Computing},
pages = {81–90},
numpages = {10},
location = {Chicago, IL, USA},
series = {STOC '04}
}

@article{KapralovLMMS17,
  title={Single pass spectral sparsification in dynamic streams},
  author={Kapralov, Michael and Lee, Yin Tat and Musco, CN and Musco, Christopher Paul and Sidford, Aaron},
  journal={SIAM Journal on Computing},
  volume={46},
  number={1},
  pages={456--477},
  year={2017},
  publisher={SIAM}
}

@article{McGregor14,
  title={Graph stream algorithms: a survey},
  author={McGregor, Andrew},
  journal={ACM SIGMOD Record},
  volume={43},
  number={1},
  pages={9--20},
  year={2014},
  publisher={ACM New York, NY, USA}
}

@article{AlonR94,
  title={Random Cayley graphs and expanders},
  author={Alon, Noga and Roichman, Yuval},
  journal={Random Structures \& Algorithms},
  volume={5},
  number={2},
  pages={271--284},
  year={1994},
  publisher={Wiley Online Library}
}

@InProceedings{JalanM21,
  author =	{Jalan, Akhil and Moshkovitz, Dana},
  title =	{{Near-Optimal Cayley Expanders for Abelian Groups}},
  booktitle =	{41st IARCS Annual Conference on Foundations of Software Technology and Theoretical Computer Science (FSTTCS 2021)},
  pages =	{24:1--24:23},
  series =	{Leibniz International Proceedings in Informatics (LIPIcs)},
  ISBN =	{978-3-95977-215-0},
  ISSN =	{1868-8969},
  year =	{2021},
  volume =	{213},
  editor =	{Boja\'{n}czyk, Miko{\l}aj and Chekuri, Chandra},
  publisher =	{Schloss Dagstuhl -- Leibniz-Zentrum f{\"u}r Informatik},
  address =	{Dagstuhl, Germany},
  URL =		{https://drops.dagstuhl.de/entities/document/10.4230/LIPIcs.FSTTCS.2021.24},
  URN =		{urn:nbn:de:0030-drops-155359},
  doi =		{10.4230/LIPIcs.FSTTCS.2021.24}
}

@article{Rothvoss17,
  title={Constructive discrepancy minimization for convex sets},
  author={Rothvoss, Thomas},
  journal={SIAM Journal on Computing},
  volume={46},
  number={1},
  pages={224--234},
  year={2017},
  publisher={SIAM}
}

\end{document}